\documentclass[unsortedaddress,onecolumn, nofootinbib,10pt]{revtex4}
\usepackage[utf8]{inputenc}
\usepackage{amsmath,empheq,color}
\usepackage{amsfonts}
\usepackage{amsthm}
\usepackage{amssymb}
\usepackage{graphicx}
\usepackage{hyperref}
\usepackage[normalem]{ulem}
\usepackage{epigraph}
\usepackage{mathrsfs}
\usepackage{bbm}
\usepackage{wrapfig}
\usepackage{lineno}
\usepackage{scalerel,stackengine}
\usepackage{epigraph}
\usepackage{cancel}
\usepackage{soul}
\usepackage{ulem}

\newtheorem{theorem}{Theorem}

\newtheorem{corollary}{Corollary}
\newtheorem{example}{Example}
\newtheorem{definition}{Definition}
\newtheorem{proposition}{Proposition}
\newtheorem{lemma}{Lemma}

\renewcommand{\L}{\mathfrak{L}}
\newcommand{\X}{X_{h}^{t}}

\begin{document}

\title{A geometric approach to the generalized Noether theorem} 

\author{Alessandro Bravetti}
\email{alessandro.bravetti@iimas.unam.mx} 
\affiliation{Instituto de Investigaciones en Matem\'aticas Aplicadas y en Sistemas, Universidad Nacional Aut\'onoma de M\'exico, A. P. 70543, M\'exico, DF 04510, Mexico}

\author{Angel Garcia-Chung}
\email{alechung@xanum.uam.mx} 
\affiliation{Departamento de F\'isica, Universidad Aut\'onoma Metropolitana - Iztapalapa, \\
San Rafael Atlixco 186, Ciudad de M\'exico 09340, M\'exico}
\affiliation{Universidad Panamericana, \\ 
Tecoyotitla 366. Col. Ex Hacienda Guadalupe Chimalistac, C.P. 01050 Ciudad de M\'exico, M\'exico}

\begin{abstract}
We provide a geometric extension of the generalized Noether theorem for scaling symmetries recently presented in~\cite{zhang2020generalized}.
Our version of the generalized Noether theorem has several positive features: it is  constructed in the most natural extension of the phase space, allowing for the symmetries to
be vector fields on such manifold and for the associated invariants to be first integrals of motion; it has a direct geometrical proof, paralleling the proof of the standard
phase space version of Noether's theorem; it automatically yields an inverse Noether theorem; it applies also to a large class of dissipative systems;
and finally, it allows for a much larger class of symmetries than just scaling transformations which 
form a Lie algebra, and are thus amenable to algebraic treatments.
\end{abstract}

\maketitle

\section{Motivation and previous  works}
\epigraph{
\emph{There are good reasons why the theorems should all be easy and the definitions hard. 
As the evolution of Stokes' Theorem revealed, a single simple principle can masquerade as several difficult results; 
the proofs of many theorems involve merely stripping away the disguise. 
The definitions, on the other hand, serve a twofold purpose: they are rigorous replacements for vague notions, and machinery for elegant proofs. \\
M. Spivak~\cite{spivak2018calculus}.}
}

Noether's theorem is one of the most profound and beautiful results in mathematical physics. 
This is because it is very easy to state and prove (in fact, it is one of the central topics  
in elementary courses on mechanics) but its consequences are far-reaching, ranging from the standard conservation
of energy and angular momentum in classical mechanics, to the existence of Noether charges in general relativity (one
of them being black holes' entropy~\cite{wald1993black}),
up to the reduction theorems in symplectic, Poisson, and contact geometry~\cite{abraham1978foundations}.

However, it is also well-known that there exist some important transformations that act like symmetries but 
are not Noether symmetries, for which it has long been believed that it does not exist an associated invariant quantity.
The most famous example of such symmetries is that of \emph{Kepler scalings}
	\begin{equation}\label{eq:KS}
	t\rightarrow \lambda^{3}t\,,\qquad {\bf q}\rightarrow \lambda^{2}{\bf q}\,,\qquad {S}\rightarrow \lambda{S}\,,\qquad \lambda=\text{const.}
	\end{equation}
Indeed, it is known that this is a {type of symmetry} of the Kepler problem (actually, this is the symmetry underlying Kepler's third law~\cite{zhang2019kepler}),
but it is not of Noether type, as it is clear because e.g.~the action and the dynamics are rescaled by the transformation. 
This type of symmetries that rescale the dynamics by multiplying it by a constant term
is sometimes referred to in the literature as a \emph{scaling symmetry} and they are examples of the more general \emph{dynamical similarities}~\cite{sloan2018dynamical}.

Let us remark that scaling symmetries have profound physical consequences. For instance, they have been employed to derive generalizations of the virial 
theorem~\cite{carinena2012geometric, carinena2013canonoid, zhang2020generalized}; moreover,
it has been recently argued that one can use such symmetries in order to
reduce a Hamiltonian system to a purely relational description in terms of the observables of the theory, where the (equivalent) dynamics can be shown to
be free of some spurious singularities such as the \emph{big bang singularity} of Einstein's equations~\cite{sloan2018dynamical,sloan2019scalar}.
Therefore it is of primary importance to have a general theory of such symmetries that may help recognize and classify them, and even more so
if such theory could help identify the associated conserved quantities (if any) that can then be used to perform the reduction of the system, as it is the case for
the standard Noether 
theorem.

Recently, in~\cite{zhang2020generalized} a generalized version of Noether's theorem that applies to scaling symmetries has been proved.
They proved that to any one-parameter family of transformations ${\bf q}\rightarrow {\bf q}'(t')$, $t\rightarrow t'$ that rescales the Lagrangian (and hence the action) as
	\begin{equation}\label{eq:rescaleZhang}
	L\left({\bf q}',\frac{d {\bf q}'}{d t'},t'\right)\frac{dt'}{dt}=\Lambda L\left({\bf q},\frac{d {\bf q}}{dt},t\right)
	\end{equation}
up to a boundary term, where $\Lambda$ is a constant, one can always associate a conserved quantity of the form
	\begin{equation}\label{eq:conservedZhang}
	Q={\bf p} \cdot \delta {\bf q}-H\delta t-S(t)\delta \Lambda\,,
	\end{equation}
where ${\bf p}=\frac{\partial L}{\partial \dot {\bf q}}$ are the conjugate momenta, $H$ is the Hamiltonian, and 
$S(t)=\int_{0}^{t}Ld\tau$ is the \emph{on-shell action}, namely, the action calculated along the trajectory.
For instance, in the case of Kepler scalings~\eqref{eq:KS}, the associated invariant~\eqref{eq:conservedZhang} is
	\begin{equation}\label{eq:conservedKepler}
	Q_{K}= 2\,{\bf p} \cdot {\bf q} - 3 t H_{K}-S(t)\,,
	\end{equation}
where 
	\begin{equation}
	H_K = \frac{{\bf p} \cdot {\bf p}}{2m} - \frac{4 \epsilon}{\sqrt{{\bf q} \cdot {\bf 	q}}}, \label{eq:HK}
	\end{equation}
is the Kepler Hamiltonian.

One of the most intriguing aspects of such generalized Noether theorem is the fact that the invariants~\eqref{eq:conservedZhang} 
are not functions on the phase space of the system, since they include also
the on-shell action variable. 
Therefore, a natural question arising about these invariants is the following
\begin{changemargin}{.5cm}{.5cm} 
	In what sense are the invariants~\eqref{eq:conservedZhang} first integrals of motion, that is, well-defined functions from the (possibly extended) phase space
	of the system to the real numbers that are constant along trajectories?
\end{changemargin} 
Or, more precisely,
\begin{changemargin}{.5cm}{.5cm} 
	Can we define the generalized Noether theorem geometrically in some appropriate (extension of the) phase space, so that
	the infinitesimal generators of scaling symmetries are actual vector fields on such manifold and their related invariants~\eqref{eq:conservedZhang} are actual
	functions on the manifold? 
\end{changemargin} 

Moreover, if the above is possible, it would be great if this geometric structure would allow for a simple proof of the generalized Noether theorem,
possibly paralleling the one of the standard Noether theorem, and also provide a direct geometric relationship between the infinitesimal generators and
their corresponding invariants.

{An interesting approach to solve some of the above puzzles  has been presented in~\cite{garcia2020vertical}, where the authors
construct a Noether theorem for scaling symmetries from the variational principle in the vertical extension of the standard phase space. 
The advantages of such approach are its general scope
and the fact that it effectively yields invariants associated with scaling symmetries which do not depend on computing the on-shell action. However,
the relationship between the original scaling symmetry and the constructed invariant is not so clear, since it turns out that the scaling symmetry 
in the original configuration space is not the Noether symmetry associated with the invariant, and that another symmetry can be found to be its associated
Noether symmetry. Moreover, the construction is based on performing first the variational formulation in the vertical extension of the tangent space and then projecting back 
down to the original manifold, which makes the derivation a bit obscure and can hamper the analysis of further questions such as e.g.~whether the new
symmetries obtained from this construction form a Lie algebra.
}

{On the other hand, the presence of the term $S(t)$ in~\eqref{eq:conservedZhang} is remindful of \emph{contact Hamiltonian mechanics}, 
where the contact phase space directly includes the on-shell action as an additional dynamical variable, thus serving as a natural arena in which the invariants (\ref{eq:conservedZhang}) 
can be described.}
{Indeed, a great deal of work has been devoted recently to the study of the so-called \emph{contact Hamiltonian systems}, which are 
Hamiltonian systems on contact manifolds, the ``odd-dimensional counterpart of symplectic manifolds'' 
(see e.g.~\cite{arnold2013mathematical, geiges2008introduction, wang2016implicit, bravetti2017contact, wang2019aubry, valcazar2018contact, bravetti2020invariant} 
for the formal theory
and~\cite{bravetti2018contact,Bravetti2017a,ciaglia2018contact, bravetti2020numerical} and the references therein for some applications). 
In this context, the standard symplectic phase 
space is extended to include an additional direction which in local coordinates turns out to be precisely the on-shell action of the system, and  there is a natural
way in which standard symplectic Hamiltonian systems can be embedded into contact Hamiltonian systems (and viceversa). Moreover contact Hamiltonian systems
can be derived from Herglotz' variational principle~\cite{georgieva2002first,georgieva2003generalized,vermeeren2019contact}, 
and a counterpart of the standard Noether theorem for such case has been proposed both in the variational
formulation~\cite{georgieva2002first,georgieva2003generalized,lazo2019noether} as well as in the geometric setting~\cite{de2020infinitesimal,gaset2020new}.
}

Motivated by all the above considerations, 
in this work we provide a geometric extension of the generalized Noether theorem for scaling symmetries recently presented in~\cite{zhang2020generalized}.
To do so, we use the geometric formulation for contact Hamiltonian systems in their
extended phase space (including time). 
In this way we will be able to prove very easily a generalized Noether theorem and its inverse that include as a particular case the association of scaling symmetries with their related invariants 
found in~\cite{zhang2020generalized}.
Moreover, we can prove directly that all the Noether symmetries defined in this way form a Lie algebra.
Further benefits of our approach are that the definition of generalized symmetries in this context
 allows for even more general symmetries than just scaling symmetries of the dynamics, and that it also
applies directly to a large class of dissipative systems.
Finally, in order to concretely show some of the consequences of our approach, we will present some paradigmatic examples in which the general theory 
recovers known results and reveals new interesting features.

\section{A brief introduction to symplectic and contact Hamiltonian mechanics}
In order to make this paper self-contained, we introduce here the very basic tools from symplectic and contact Hamiltonian systems that are needed in the following.
We refer to~\cite{bravetti2017contact,valcazar2018contact,bravetti2020invariant} 
for more detailed accounts and to~\cite{valcazar2018contact,gaset2020new} for the Lagrangian counterparts.

\subsection{The symplectic case}
We start with some {standard definitions, that can be found in any textbook on analytical mechanics, e.g.~\cite{abraham1978foundations}}. 

	\begin{definition}
The {\bf symplectic phase space} of a mechanical system is the cotangent bundle $T^{*}Q$, where $Q$ is the configuration manifold of the system, endowed with the canonical
symplectic form $\Omega=-d\alpha$, with $\alpha$ being the Liouville 1-form.
	\end{definition}
In local Darboux coordinates $(q^{a},p_{a})$, we have that $\alpha=p_{a}dq^{a}$, and thus $\Omega=dq^{a}\wedge dp_{a}$.
	\begin{definition}
	A {\bf symplectic Hamiltonian system} is a triple $(T^{*}Q,\Omega,H)$, with $H:T^{*}Q\rightarrow \mathbb R$ a sufficiently regular function called the symplectic Hamiltonian.
	\end{definition}
	\begin{definition}
 A {\bf symplectic Hamiltonian vector field $X_{H}$} is defined to be the only solution  to the condition
	\begin{equation}\label{eq:XH}
	\iota_{X_{H}}\Omega=-dH\,.
	\end{equation}
	\end{definition}
One can directly check that in Darboux coordinates this leads to the standard Hamilton equations 
	\begin{equation}\label{eq:Hameqs}
	\dot q^{a}=\frac{\partial H}{\partial p_{a}} \qquad \dot p_{a}=-\frac{\partial H}{\partial q^{a}}\,,
	\end{equation}
from which one can recover e.g.~the Newtonian dynamics of conservative systems.


Now we are ready to define Noether symmetries and conserved quantities.
	\begin{definition}\label{def:NSym}
	A {\bf {Noether} symmetry of a symplectic Hamiltonian system} is a vector field $Y\in \mathfrak{X}(T^{*}Q)$ such that
	$\iota_{Y}\Omega=-d F$ ($Y$ is Hamiltonian for some Hamiltonian function $F$) and $\mathfrak{L}_{Y}H=0$ ($Y$ {\bf preserves $H$}).
	\end{definition}
	\begin{definition}
	A {\bf conserved quantity} is a function $F:T^{*}Q\rightarrow\mathbb R$ such that $\mathfrak{L}_{X_{H}}F=0$.
	\end{definition}
	
Furthermore, we note that the condition~\eqref{eq:XH} provides an 
isomorphism between the Lie algebra of 
Hamiltonian vector fields on $T^{*}Q$ with the Lie bracket
and the Lie algebra of functions on $T^{*}Q$ with the Poisson bracket
	\begin{equation}\label{eq:Poisson}
	\{F,G\}_{P}:=\iota_{X_{F}}\iota_{X_{G}}\Omega\,.
	\end{equation}
Equipped with this isomorphism,
a Noether symmetry can equivalently be expressed as a Hamiltonian vector field $X_{F}$ such that $\{F,H\}_{P}=0$.
Thus,
 Noether's theorem can be proved in one line using the antisymmetry of the Poisson bracket, to obtain 
(see~\cite{butterfield2006symmetry} for more comments)
	\begin{theorem}[Symplectic Noether]\label{th:Noether1}
	$X_{F}$ is a {Noether} symmetry of a symplectic Hamiltonian system if and only if $F$ is a conserved quantity.
	\end{theorem}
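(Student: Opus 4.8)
The plan is to reduce both halves of the claimed equivalence to a single algebraic condition — the vanishing of the Poisson bracket $\{F,H\}_P$ — and then read off the ``if and only if'' for free. The central identity I would establish first is
\begin{equation}
\mathfrak{L}_{X_H}F=\{F,H\}_P\,.
\end{equation}
Once this is in hand the theorem is immediate: by definition $F$ is a conserved quantity precisely when $\mathfrak{L}_{X_H}F=0$, i.e.\ precisely when $\{F,H\}_P=0$; and, via the Lie-algebra isomorphism recorded just after~\eqref{eq:Poisson}, the Hamiltonian vector field $X_F$ is a Noether symmetry precisely when $\{F,H\}_P=0$ as well. Both chains of reasoning therefore meet at the same bracket condition.

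To prove the identity I would compute directly, using that $F$ is a function so that its Lie derivative along $X_H$ is just the interior product $\mathfrak{L}_{X_H}F=\iota_{X_H}dF$. Substituting the defining relation $\iota_{X_F}\Omega=-dF$ of the Hamiltonian vector field $X_F$ (the analogue of~\eqref{eq:XH} for the function $F$) gives $\mathfrak{L}_{X_H}F=-\iota_{X_H}\iota_{X_F}\Omega$, and then invoking the anticommutation of interior products on the $2$-form $\Omega$, namely $\iota_{X_H}\iota_{X_F}\Omega=-\iota_{X_F}\iota_{X_H}\Omega$, I recover $\mathfrak{L}_{X_H}F=\iota_{X_F}\iota_{X_H}\Omega=\{F,H\}_P$, which is exactly the definition~\eqref{eq:Poisson}.

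The only place demanding care — and the closest thing to an obstacle in an argument this short — is the bookkeeping of signs: the minus sign in $\iota_{X_F}\Omega=-dF$, the anticommutation of the two interior products, and the ordering convention in~\eqref{eq:Poisson} must all be tracked consistently, since a single slip would flip $\{F,H\}_P$ to $\{H,F\}_P$ and obscure the symmetry of the statement. For internal consistency with Definition~\ref{def:NSym} it is also worth checking that the preservation clause $\mathfrak{L}_{X_F}H=0$ matches the bracket condition: the same computation yields $\mathfrak{L}_{X_F}H=\iota_{X_F}dH=-\iota_{X_F}\iota_{X_H}\Omega=-\{F,H\}_P$, so both clauses of the Noether-symmetry definition collapse to $\{F,H\}_P=0$. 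With the signs verified, the two equivalences close the proof in essentially one line, exactly as the antisymmetry of the Poisson bracket suggests.
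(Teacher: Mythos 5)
Your proposal is correct and follows essentially the same route as the paper: both reduce the two clauses of the statement to the single condition $\{F,H\}_{P}=0$ via the defining relation $\iota_{X_F}\Omega=-dF$ and the antisymmetry of the Poisson bracket, which in your version appears as the sign flip between $\mathfrak{L}_{X_H}F=\{F,H\}_{P}$ and $\mathfrak{L}_{X_F}H=-\{F,H\}_{P}$. Your sign bookkeeping is consistent with the paper's conventions~\eqref{eq:XH} and~\eqref{eq:Poisson}, so nothing is missing.
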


Now, in order to understand Noether's theorem from a more general perspective, we need to introduce {some} general definitions.
	\begin{definition}\label{def:DynSymi}
	A {\bf {dynamical similarity of a vector field $X$}} is a vector field $Y\in \mathfrak{X}(T^{*}Q)$ such that $[Y,X]=\Lambda X$, for some (in general non-constant)
	function $\Lambda$.
	\end{definition}
As a particular but important case of the above, we have 
	\begin{definition}\label{def:DynSymm}
	A {\bf {dynamical symmetry of a vector field $X$}} is a vector field $Y\in \mathfrak{X}(T^{*}Q)$ such that $[Y,X]=0$.
	\end{definition}

It is easy to verify that any Noether symmetry is a dynamical symmetry of $X_{H}$; 
however the converse is not true, as the following proposition in the particular case $\lambda_{1}=\lambda_{2}$ clearly shows.
	\begin{proposition}\label{prop:ScalingSymm}
	Let $Y\in \mathfrak{X}(T^{*}Q)$ be such that $\mathfrak{L}_{Y}\Omega=\lambda_{2} \Omega$ ($Y$ is a {\bf non-strictly canonical symmetry}~\cite{carinena2012geometric,carinena2013canonoid,sloan2018dynamical})
	and $\mathfrak{L}_{Y}H=\lambda_{1} H$ ($Y$ rescales $H$).
	Then $[Y,X_{H}]=(\lambda_{1}-\lambda_{2})X_{H}$.
	\end{proposition}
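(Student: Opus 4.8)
The plan is to compute the contraction $\iota_{[Y,X_H]}\Omega$ explicitly and then read off $[Y,X_H]$ using the non-degeneracy of the symplectic form. The essential tool is the standard commutator identity between the Lie derivative and the interior product, $[\mathfrak{L}_Y,\iota_X]=\iota_{[Y,X]}$, which applied to a form $\beta$ reads $\mathfrak{L}_Y(\iota_X\beta)=\iota_{[Y,X]}\beta+\iota_X(\mathfrak{L}_Y\beta)$. I would specialize this to $X=X_H$ and $\beta=\Omega$ and evaluate each side separately using the two hypotheses on $Y$.

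First I would handle the left-hand side. Since $\iota_{X_H}\Omega=-dH$ by the defining relation~\eqref{eq:XH}, and since $\mathfrak{L}_Y$ commutes with the exterior derivative, I obtain $\mathfrak{L}_Y(\iota_{X_H}\Omega)=-\mathfrak{L}_Y\,dH=-d(\mathfrak{L}_Y H)=-d(\lambda_1 H)=-\lambda_1\,dH$, where the last equality uses that $\lambda_1$ is a constant. For the right-hand side I would insert the hypothesis $\mathfrak{L}_Y\Omega=\lambda_2\Omega$ into the second term, giving $\iota_{X_H}(\mathfrak{L}_Y\Omega)=\lambda_2\,\iota_{X_H}\Omega=-\lambda_2\,dH$.

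Equating the two expressions yields $-\lambda_1\,dH=\iota_{[Y,X_H]}\Omega-\lambda_2\,dH$, so that $\iota_{[Y,X_H]}\Omega=(\lambda_2-\lambda_1)\,dH$. Rewriting $dH=-\iota_{X_H}\Omega$ via~\eqref{eq:XH} once more, this becomes $\iota_{[Y,X_H]}\Omega=(\lambda_1-\lambda_2)\,\iota_{X_H}\Omega=\iota_{(\lambda_1-\lambda_2)X_H}\Omega$. Finally, because $\Omega$ is non-degenerate, the map $Z\mapsto\iota_Z\Omega$ is injective, and I conclude $[Y,X_H]=(\lambda_1-\lambda_2)X_H$, as claimed.

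The step requiring the most care is the evaluation of the left-hand side, specifically the identity $d(\mathfrak{L}_Y H)=\lambda_1\,dH$: this holds precisely because $\lambda_1$ is constant, since otherwise the full expansion $d(\lambda_1 H)=\lambda_1\,dH+H\,d\lambda_1$ would contribute an extra term $H\,d\lambda_1$ that would obstruct the clean proportionality to $X_H$. Everything else is a routine application of the bracket–contraction identity together with the defining equation~\eqref{eq:XH} and the non-degeneracy of $\Omega$, so I do not anticipate any genuine difficulty beyond keeping track of signs.
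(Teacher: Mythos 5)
Your proof is correct and follows essentially the same route as the paper's: both apply the commutator identity $[\mathfrak{L}_{Y},\iota_{X_{H}}]=\iota_{[Y,X_{H}]}$ to $\Omega$, insert the hypotheses $\mathfrak{L}_{Y}\Omega=\lambda_{2}\Omega$ and $\mathfrak{L}_{Y}H=\lambda_{1}H$ together with $\iota_{X_{H}}\Omega=-dH$, and conclude via the non-degeneracy of $\Omega$. Your closing remark that the constancy of $\lambda_{1}$ is what makes $d(\lambda_{1}H)=\lambda_{1}\,dH$ work is a useful observation that the paper leaves implicit.
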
 
	\begin{proof}
	We have
	\begin{align}
	\iota_{[X_{H},Y]}\Omega=\iota_{X_{H}}\mathfrak{L}_{Y}\Omega-\mathfrak{L}_{Y}\iota_{X_{H}}\Omega=\lambda_{2}\iota_{X_{H}}\Omega+d\mathfrak{L}_{Y}H=(\lambda_{1}-\lambda_{2})dH\,,
	\end{align}
	and therefore, since $\Omega$ is non-degenerate, we conclude that $[Y,X_{H}]=(\lambda_{1}-\lambda_{2})X_{H}$.
	\end{proof}
Clearly, by the isomorphism described above, we know that we cannot associate any function of the phase space to a vector field that rescales $\Omega$ (these are not Hamiltonian),
and therefore we do not have much hope to extend Noether's theorem in order to include dynamical symmetries and similarities in the standard phase space.
As a particularly relevant example for our exposition, we first note that Kepler scalings~\eqref{eq:KS} can be shown to be induced by the action of the following vector field on the symplectic phase space 
	\begin{equation}\label{eq:KSgenerator1}
	Y_{KS}=2q^{a}\partial_{q^{a}}-p_{a}\partial_{p_{a}}\,. 
	\end{equation}
Then we have the following result
	\begin{proposition}\label{prop:KSsimilarity}
	$Y_{KS}$ is a dynamical similarity of Kepler's Hamiltonian vector field. 
	\end{proposition}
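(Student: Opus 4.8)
The plan is to derive the claim as an immediate corollary of Proposition \ref{prop:ScalingSymm}, so that no direct bracket computation is needed. By Definition \ref{def:DynSymi} it suffices to exhibit a function $\Lambda$ with $[Y_{KS},X_{H_K}]=\Lambda\,X_{H_K}$; Proposition \ref{prop:ScalingSymm} tells us that if we can find constants $\lambda_1,\lambda_2$ such that $\mathfrak{L}_{Y_{KS}}\Omega=\lambda_2\,\Omega$ and $\mathfrak{L}_{Y_{KS}}H_K=\lambda_1 H_K$, then $\Lambda=\lambda_1-\lambda_2$ does the job (and, being constant, makes $Y_{KS}$ even a bona fide dynamical similarity). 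Thus the entire proof reduces to verifying these two scaling identities.

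First I would compute $\mathfrak{L}_{Y_{KS}}\Omega$ for the canonical form $\Omega=dq^a\wedge dp_a$. Since $\iota_{Y_{KS}}dq^a=2q^a$ and $\iota_{Y_{KS}}dp_a=-p_a$, Cartan's formula gives $\mathfrak{L}_{Y_{KS}}dq^a=2\,dq^a$ and $\mathfrak{L}_{Y_{KS}}dp_a=-dp_a$, so by the Leibniz rule $\mathfrak{L}_{Y_{KS}}\Omega=(2-1)\,\Omega=\Omega$, i.e.\ $\lambda_2=1$. In particular $Y_{KS}$ is non-strictly canonical, which already confirms that it is not Hamiltonian and hence, by the isomorphism discussed above, lies outside the scope of the symplectic Noether theorem.

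Next I would compute $\mathfrak{L}_{Y_{KS}}H_K=Y_{KS}(H_K)$ directly from \eqref{eq:HK}. The kinetic term $\mathbf{p}\cdot\mathbf{p}/2m$ is homogeneous of degree two in the momenta, so the $-p_a\partial_{p_a}$ part of $Y_{KS}$ multiplies it by $-2$; the potential term $-4\epsilon/\sqrt{\mathbf{q}\cdot\mathbf{q}}$ is homogeneous of degree $-1$ in the positions, so the $2q^a\partial_{q^a}$ part multiplies it by $2\cdot(-1)=-2$. The key observation, and really the only substantive point in the whole argument, is that these two weights coincide: both summands of $H_K$ scale by the same factor, so $\mathfrak{L}_{Y_{KS}}H_K=-2H_K$ remains proportional to $H_K$, giving $\lambda_1=-2$. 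This matching is precisely the interplay between the $2:-1$ weights of $Y_{KS}$ in \eqref{eq:KSgenerator1} and the inverse-distance form of the Kepler potential; for a generic power-law potential the two weights would differ and $\mathfrak{L}_{Y_{KS}}H_K$ would fail to be a multiple of $H_K$.

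Finally, Proposition \ref{prop:ScalingSymm} yields $[Y_{KS},X_{H_K}]=(\lambda_1-\lambda_2)X_{H_K}=-3\,X_{H_K}$, a constant multiple of $X_{H_K}$, which by Definition \ref{def:DynSymi} is exactly the assertion that $Y_{KS}$ is a dynamical similarity of Kepler's Hamiltonian vector field. I do not expect a genuine obstacle here: all the content is the homogeneity bookkeeping of the previous paragraph, and the conclusion is then automatic from the already-proved proposition. The resulting factor $-3$ is worth flagging as a consistency check, since it reproduces the time-rescaling exponent $t\rightarrow\lambda^3 t$ in the Kepler scaling \eqref{eq:KS} and the coefficient of $tH_K$ in the invariant \eqref{eq:conservedKepler}.
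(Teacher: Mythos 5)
Your proof is correct and takes essentially the same route as the paper's: check the two scaling identities $\mathfrak{L}_{Y_{KS}}\Omega=\Omega$ and $\mathfrak{L}_{Y_{KS}}H_{K}=-2H_{K}$, then invoke Proposition~\ref{prop:ScalingSymm} to get $[Y_{KS},X_{H_{K}}]=-3X_{H_{K}}$. The only difference is that you write out explicitly the homogeneity bookkeeping that the paper leaves as ``one can directly check.''
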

	\begin{proof}
	Let $H_{K}$ be the Kepler Hamiltonian~\eqref{eq:HK} and $\Omega_{K}$ the canonical symplectic form. 
	Then, as one can directly check, $\mathfrak{L}_{Y_{KS}}H_{K}=-2 H_{K}$ and
	$\mathfrak{L}_{Y_{KS}}\Omega_{K}= \Omega_{K}$, and therefore
	by Proposition~\ref{prop:ScalingSymm}, we have
	 $[Y_{KS},X_{H_{K}}]=-3X_{H_{K}}$.
	 \end{proof}
We remark that the result   
$[Y_{KS},X_{H_{K}}]=-3X_{H_{K}}$ appearing in the proof of Proposition~\ref{prop:KSsimilarity} is expected, 
as it means that the dynamics after the transformation induced by $Y_{KS}$ is rescaled by $\lambda^{-3}$, with $\lambda$
being the mock parameter along the flow of $Y_{KS}$. Therefore by
 scaling the time variable as in~\eqref{eq:KS}, we would observe no difference in the trajectories, and therefore we can consider $Y_{KS}$
 as a ``symmetry'' of the Kepler dynamics in some sense 
(note also that the proper scaling of the action in~\eqref{eq:KS} can be recovered by dimensional analysis).
However, 
as a consequence of Proposition~\ref{prop:KSsimilarity} and the above discussion, we conclude that it would be nonsensical to look for a conserved quantity associated to $Y_{KS}$
in the standard symplectic phase space.

\subsection{The contact case}
As a further generalization of the above results, we now provide analogue definitions for the contact case and proceed to prove the ``standard'' Noether theorem in this case.
	\begin{definition}
	The {\bf contact phase space} is the canonical contactification of the cotangent bundle, that is, the manifold $T^{*}Q\times\mathbb R$ endowed with the contact 1-form
$\eta=dS-\pi^{*}_{S}\alpha$, where $\pi_{S}:T^{*}Q\times\mathbb R\rightarrow T^{*}Q$ is the standard projection, and $S$ is the global coordinate on $\mathbb R$.
	\end{definition}
Associated to the contact 1-form $\eta$, there is a unique vector field, called the {\bf Reeb vector field}, defined by the conditions $\iota_{R}d\eta=0$ and $\iota_{R}\eta=1$.
In local Darboux coordinates $(q^{a},p_{a},S)$, we have that $\eta=dS-p_{a}dq^{a}$ and $R=\partial/\partial S$. 
	\begin{definition}
	A {\bf contact Hamiltonian system} is a triple $(T^{*}Q\times\mathbb R,\eta,h)$, with $h:T^{*}Q\times \mathbb R\rightarrow \mathbb R$ a sufficiently regular function called the contact Hamiltonian.
	\end{definition}
	\begin{definition}\label{def:CHVF}
 A {\bf contact Hamiltonian vector field $X_{h}$} is defined to be the only solution  to the conditions
	\begin{equation}\label{eq:Xh}
	\iota_{X_{h}}d\eta=dh-R(h)\eta\,\qquad \text{and} \qquad \iota_{X_{h}}\eta=-h\,.
	\end{equation}
	\end{definition}
\noindent One can directly check that in Darboux coordinates this leads to the contact Hamiltonian equations 
	\begin{equation}\label{eq:hameqs}
	\dot q^{a}=\frac{\partial h}{\partial p_{a}} \qquad 
	\dot p_{a}=-\frac{\partial h}{\partial q^{a}}-p_{a}\frac{\partial h}{\partial S} \qquad
	\dot S=p_{a}\frac{\partial h}{\partial p_{a}}-h\,,
	\end{equation}
from which one can recover the standard Hamilton equations~\eqref{eq:Hameqs} for $q^{a}$ and $p_{a}$ whenever $h$ does not depend on $S$.
Note also that from the last equation in~\eqref{eq:hameqs}, \emph{the new variable $S$ is the action of the system evaluated along the trajectories of the dynamics},
that is the \emph{on-shell action}. 
Moreover, the contact Hamiltonian equations~\eqref{eq:hameqs} can be used to model mechanical systems with different types of 
dissipative terms (see e.g.~\cite{bravetti2017contact,bravetti2016thermostat,gaset2020new}).

Now we are ready to define Noether symmetries and the analogue of conserved quantities in the contact case.
First we point out that the condition~\eqref{eq:Xh} provides a (global) 
isomorphism between the Lie algebra of 
contact vector fields on $T^{*}Q\times\mathbb R$ with the Lie bracket
and the Lie algebra of functions on $T^{*}Q\times\mathbb R$ with the Jacobi bracket
	\begin{equation}\label{eq:Jacobi}
	\{F,G\}_{J}:=\iota_{[X_{F},X_{G}]}\eta\,.
	\end{equation}
Then, starting from the symplectic analogue and from~\eqref{eq:Jacobi}, we have the following natural definitions. 
	\begin{definition}\label{def:Noethercontact1}
	A {\bf {Noether} symmetry of a contact Hamiltonian system} is a contact Hamiltonian vector field $X_{F}\in \mathfrak{X}(T^{*}Q\times\mathbb R)$ such that 
	$\{F,h\}_{J}=0$.
	\end{definition}

	
	

Moreover, we can generalize the concept of conserved quantities to the case of dissipative systems, which is the general case for contact systems, as follows.
	\begin{definition}\label{def:dissipq}
	A {\bf dissipated quantity of a contact Hamiltonian system} is a function $F:T^{*}Q\times\mathbb R\rightarrow\mathbb R$ that is dissipated at the same rate as the contact Hamiltonian, 
	 i.e.~$\mathfrak{L}_{X_{h}}F=-R(h)F$ (recall that $\mathfrak{L}_{X_{h}}h=-R(h)h$).
	 \end{definition}

Then, the contact version of Noether's theorem goes as follows~\cite{de2020infinitesimal,gaset2020new}: 
	\begin{theorem}[Contact Noether]\label{th:CNT}
	$X_{F}$ is a Noether symmetry of a contact Hamiltonian system if and only if $F=-\iota_{X_{F}}\eta$ is a dissipated quantity.
	\end{theorem}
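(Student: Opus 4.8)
The plan is to reduce the theorem to a single algebraic identity relating the Jacobi bracket $\{F,h\}_J$ to the Lie derivative $\mathfrak{L}_{X_h}F$, from which the stated equivalence follows at once. First I would observe that, since $X_F$ is by hypothesis a contact Hamiltonian vector field, the second defining condition in~\eqref{eq:Xh} already reads $\iota_{X_F}\eta=-F$, so the relation $F=-\iota_{X_F}\eta$ is automatic and merely identifies $F$ as the generating function of $X_F$. Hence both sides of the claimed equivalence concern the \emph{same} function $F$, and the real content is the equivalence between $\{F,h\}_J=0$ and $\mathfrak{L}_{X_h}F=-R(h)F$.

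The key intermediate step is the identity $\mathfrak{L}_{X_h}\eta=-R(h)\eta$. I would derive it from Cartan's formula $\mathfrak{L}_{X_h}\eta=\iota_{X_h}d\eta+d(\iota_{X_h}\eta)$ by substituting the two conditions in~\eqref{eq:Xh}: the first gives $\iota_{X_h}d\eta=dh-R(h)\eta$, while the second gives $d(\iota_{X_h}\eta)=-dh$, so the $dh$ terms cancel and only $-R(h)\eta$ remains. This expresses the fact that the flow of $X_h$ preserves $\eta$ only up to the conformal factor $-R(h)$.

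Next I would compute $\mathfrak{L}_{X_h}F=-\mathfrak{L}_{X_h}(\iota_{X_F}\eta)$ using the standard identity $\mathfrak{L}_{X}(\iota_{Y}\beta)=\iota_{[X,Y]}\beta+\iota_{Y}\mathfrak{L}_{X}\beta$ with $X=X_h$, $Y=X_F$, $\beta=\eta$. This yields $\mathfrak{L}_{X_h}F=-\iota_{[X_h,X_F]}\eta-\iota_{X_F}\mathfrak{L}_{X_h}\eta$. In the first term, the antisymmetry $[X_h,X_F]=-[X_F,X_h]$ together with~\eqref{eq:Jacobi} gives $-\iota_{[X_h,X_F]}\eta=\iota_{[X_F,X_h]}\eta=\{F,h\}_J$; in the second, inserting $\mathfrak{L}_{X_h}\eta=-R(h)\eta$ and $\iota_{X_F}\eta=-F$ gives $-\iota_{X_F}\mathfrak{L}_{X_h}\eta=-R(h)F$. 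Combining these produces the master identity $\mathfrak{L}_{X_h}F=\{F,h\}_J-R(h)F$. Reading it in both directions settles the theorem: $\{F,h\}_J=0$ holds exactly when $\mathfrak{L}_{X_h}F=-R(h)F$, i.e.~exactly when $F$ is a dissipated quantity in the sense of Definition~\ref{def:dissipq}.

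I do not expect a genuine obstacle, as every step is a direct application of Cartan calculus together with the defining relations~\eqref{eq:Xh}; the proof mirrors the one-line symplectic argument behind Theorem~\ref{th:Noether1}, with the extra term $R(h)F$ arising precisely from the non-preservation of $\eta$ along $X_h$. The only point requiring care is the bookkeeping of signs and of the order of arguments in the bracket — in particular the interchange $[X_h,X_F]=-[X_F,X_h]$ used to match the sign convention of~\eqref{eq:Jacobi}.
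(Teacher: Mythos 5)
Your proof is correct and takes essentially the same route as the paper's: both compute $\mathfrak{L}_{X_h}F=-\mathfrak{L}_{X_h}\left(\iota_{X_F}\eta\right)$ via the identity $[\mathfrak{L}_{X_h},\iota_{X_F}]=\iota_{[X_h,X_F]}$ together with $\mathfrak{L}_{X_h}\eta=-R(h)\eta$, arriving at the same master identity $\mathfrak{L}_{X_h}F=\{F,h\}_{J}-R(h)F$. Your write-up is merely more explicit than the paper's one-liner, in that it spells out the Cartan-formula derivation of $\mathfrak{L}_{X_h}\eta=-R(h)\eta$ and the identification of the commutator term with the Jacobi bracket, both of which the paper leaves implicit.
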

	\begin{proof}
	The proof directly follows from the fact that
	
		$$
		\mathfrak{L}_{X_{h}}F=-\mathfrak{L}_{X_{h}}\left(\iota_{X_{F}}\eta\right)=-\iota_{X_{F}}\mathfrak{L}_{X_{h}}\eta-\iota_{\mathfrak{L}_{X_{h}}X_{F}}\eta=
		R(h)\iota_{X_{F}}\eta-\iota_{[X_{h},X_{F}]}\eta=-R(h)F-\iota_{[X_{h},X_{F}]}\eta\,.
		$$
	\end{proof}

Theorem~\ref{th:CNT} has been proved previously in this form in~\cite{de2020infinitesimal,gaset2020new}, {while similar results for the non-dissipative
case were already elucidated in~\cite{boyer2011completely}}.
Here we discuss a slight difference between our approach and theirs, mostly related to the definition of symmetries.
	In~\cite{gaset2020new} a more restrictive definition of symmetry has been considered, that is, $Y$ such that $[Y,X_{h}]=0$.
	This has been called an \emph{infinitesimal dynamical symmetry} in~\cite{gaset2020new} and Noether's theorem (therein called \emph{dissipation theorem})
	has been proved for such symmetries, analogously to the proof presented here.
	However, the condition for $Y$ to be a {Noether symmetry (Definition~\ref{def:Noethercontact1}) is more general (it allows for $[X_{F},X_{h}]\neq 0$),
	and therefore some symmetries may escape the more restrictive definition in~\cite{gaset2020new}.}
	On the other hand, in~\cite{de2020infinitesimal} the most general definition of symmetry has been considered that can lead to Theorem~\ref{th:CNT}. 
	Indeed, from the proof of 
	Theorem~\ref{th:CNT}, one finds out that the necessary and sufficient condition for the existence of a dissipated quantity associated to a vector field $Y$, is to have
	$\iota_{[X_{h},Y]}\eta=0$. This is indeed the definition of symmetry considered in~\cite{de2020infinitesimal}, where it has been referred to as a \emph{dynamical symmetry}.
	However, in such case the definition is too weak, and this hampers the 1:1 correspondence between dissipated quantities and symmetries which 
	constitutes part of the beauty of the standard Noether
	theorem. Note also that the term ``dynamical symmetry'' employed in such reference may be confusing {with respect to Definition~\ref{def:DynSymm}}. 
	Therefore we will refer to such symmetries as \emph{generalized dynamical symmetries} in the following discussion.
	For these reasons we prefer to consider a different concept of symmetry, Noether's symmetry, as stated in Definition~\ref{def:Noethercontact1}.
	Indeed, for a given dissipated quantity $F$, our definition selects the associated symmetry in the class of generalized dynamical symmetries corresponding to $F$ 
	that coincides exactly with the contact Hamiltonian vector field associated with $F$ (cf.~\cite[Remark~3]{de2020infinitesimal}).
	
	Moreover, the following important Corollary of Theorem~\ref{th:CNT} has been observed already both in~\cite{de2020infinitesimal} and in~\cite{gaset2020new}.	
	\begin{corollary}\label{cor:conserved1}
	Given two dissipated quantities $F_{1}$ and $F_{2}$, their ratio $F_{1}/F_{2}$, whenever it is well-defined, is a conserved quantity (note also that the contact
	Hamiltonian is, by definition, a dissipated quantity).
	\end{corollary}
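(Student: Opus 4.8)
The plan is to exploit the defining property of dissipated quantities, namely that each is dissipated at the same rate $-R(h)$, and then show that this common rate cancels when one forms the ratio. First I would write down, for the two given dissipated quantities $F_1$ and $F_2$, the defining relations
\begin{equation}
\mathfrak{L}_{X_{h}}F_{1}=-R(h)\,F_{1}\,,\qquad \mathfrak{L}_{X_{h}}F_{2}=-R(h)\,F_{2}\,,
\end{equation}
which hold by Definition~\ref{def:dissipq}.

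Next I would compute the Lie derivative of the ratio along $X_h$. Since $\mathfrak{L}_{X_h}$ acts as a derivation, the quotient rule applies, and wherever $F_2\neq 0$ (so that the ratio is well-defined) I obtain
\begin{equation}
\mathfrak{L}_{X_{h}}\!\left(\frac{F_{1}}{F_{2}}\right)=\frac{F_{2}\,\mathfrak{L}_{X_{h}}F_{1}-F_{1}\,\mathfrak{L}_{X_{h}}F_{2}}{F_{2}^{2}}=\frac{F_{2}\,(-R(h)F_{1})-F_{1}\,(-R(h)F_{2})}{F_{2}^{2}}=0\,.
\end{equation}
The two terms in the numerator are manifestly equal and opposite because both quantities dissipate at the identical rate $-R(h)$, so their difference vanishes identically. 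By the definition of a conserved quantity (which in the contact setting is simply a function $F$ with $\mathfrak{L}_{X_h}F=0$, the $\lambda_1=\lambda_2=0$ specialization of the dissipative case), this shows $F_1/F_2$ is conserved along the flow of $X_h$.

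The parenthetical remark that $h$ itself is a dissipated quantity follows immediately from the identity $\mathfrak{L}_{X_h}h=-R(h)h$ already recorded in Definition~\ref{def:dissipq}, so in particular any dissipated quantity divided by $h$ (where $h\neq 0$) gives a conserved quantity, and conversely ratios of the Noether invariants produced by Theorem~\ref{th:CNT} are genuine first integrals. There is essentially no obstacle here: the only subtlety worth flagging in the statement is the caveat ``whenever it is well-defined,'' meaning one must restrict to the open set where the denominator $F_2$ is nonvanishing, since the ratio and its derivative are otherwise undefined. The entire argument is a one-line application of the Leibniz rule, and the cancellation is the whole content of the corollary.
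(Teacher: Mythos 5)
Your proof is correct: the quotient rule for the Lie derivative together with the common dissipation rate $-R(h)$ gives the cancellation immediately, and this is precisely the standard argument behind the corollary, which the paper itself states without proof (attributing the observation to~\cite{de2020infinitesimal,gaset2020new}). Your caveat about restricting to the open set where $F_{2}\neq 0$ is exactly the content of the phrase ``whenever it is well-defined,'' so nothing is missing.
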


At this point we are ready to focus on the comparison between Theorem~\ref{th:Noether1} and Theorem~\ref{th:CNT}, and on
discussing the role of dynamical similarities (including e.g.~Kepler scalings) in the contact case.

To begin, 	a direct comparison between the symplectic and the contact versions of Noether's theorem leads to the following observation: 
	 in the case where $h$ does not depend on $S$, both $h$ itself and any other dissipated quantity are in fact conserved quantities, and therefore we get
	 a generalization of the symplectic version
	of Noether's theorem, given that now both the symmetries and their associated invariants can depend explicitly on $S$, similarly to~\eqref{eq:conservedZhang}.
	However, 
	it is also clear that the dissipated (or conserved) quantities associated to Noether's symmetries in the contact case are functions of positions, momenta and the action
	only, and thus~\eqref{eq:conservedZhang} cannot be recovered, since in general they are functions of the time variable too. 
	To further illustrate this point, we proceed to show that, as in the symplectic case,
	Kepler scalings~\eqref{eq:KS} are not Noether symmetries according to any of the definitions discussed above.
	To do so, first we note that Kepler scalings are induced by the action of the following vector field on the contact phase space 
	\begin{equation}\label{eq:KSgenerator2}
	Y_{KS}^{c}=2q^{a}\partial_{q^{a}}-p_{a}\partial_{p_{a}}+S\partial_{S}\,. 
	\end{equation}
	Then we have the following result, whose proof is exactly analogous to that of the corresponding result in the symplectic case
	and can be also obtained by a direct calculation
	\begin{proposition}\label{prop:KSsimilarity2}
	$Y_{KS}^{c}$ is a non-trivial dynamical similarity of Kepler's contact Hamiltonian vector field. In particular $[Y_{KS}^{c},X_{h}^{K}]=-3X_{h}^{K}$,
	where $X_{h}^{K}$ is the contact Hamiltonian vector field associated to Kepler's Hamiltonian~\eqref{eq:HK}.
	\end{proposition}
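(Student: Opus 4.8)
The plan is to follow verbatim the two-step strategy of the symplectic case, replacing the single defining relation \eqref{eq:XH} for $X_{H}$ by the pair of contact conditions \eqref{eq:Xh} for $X_{h}^{K}$. First I would establish the contact analogue of Proposition~\ref{prop:ScalingSymm}: if a vector field $Y$ satisfies $\mathfrak{L}_{Y}\eta=\lambda_{2}\,\eta$ and $\mathfrak{L}_{Y}h=\lambda_{1}\,h$ with $\lambda_{1},\lambda_{2}$ constant and $h$ annihilated by the Reeb field ($R(h)=0$), then $[Y,X_{h}]=(\lambda_{1}-\lambda_{2})X_{h}$. Granting this, the result follows by merely reading off the two scaling weights of $Y_{KS}^{c}$ on the Kepler data, exactly as Proposition~\ref{prop:KSsimilarity} reads them off from $\mathfrak{L}_{Y_{KS}}H_{K}$ and $\mathfrak{L}_{Y_{KS}}\Omega_{K}$. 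A direct computation with $Y_{KS}^{c}$ in \eqref{eq:KSgenerator2} and $\eta=dS-p_{a}dq^{a}$ gives $\mathfrak{L}_{Y_{KS}^{c}}\eta=\eta$, so $\lambda_{2}=1$; and since $H_{K}$ in \eqref{eq:HK} does not depend on $S$, the $S\partial_{S}$ component of $Y_{KS}^{c}$ contributes nothing and $\mathfrak{L}_{Y_{KS}^{c}}H_{K}=(2q^{a}\partial_{q^{a}}-p_{a}\partial_{p_{a}})H_{K}=-2H_{K}$, recovering the weight $\lambda_{1}=-2$ already used in the symplectic proof. Hence $\lambda_{1}-\lambda_{2}=-3$.

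To prove the contact analogue I would apply the identity $\iota_{[X_{h},Y]}\beta=\iota_{X_{h}}\mathfrak{L}_{Y}\beta-\mathfrak{L}_{Y}\iota_{X_{h}}\beta$, valid for any form $\beta$, to $\beta=\eta$ and $\beta=d\eta$, and then invoke the uniqueness built into the isomorphism \eqref{eq:Jacobi} (a contact vector field is determined by its defining conditions \eqref{eq:Xh}). The crucial simplification is that $H_{K}$ is $S$-independent, so $R(H_{K})=\partial_{S}H_{K}=0$ and \eqref{eq:Xh} collapses to $\iota_{X_{h}^{K}}d\eta=dH_{K}$ and $\iota_{X_{h}^{K}}\eta=-H_{K}$, just like the symplectic relation. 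Feeding in $\mathfrak{L}_{Y_{KS}^{c}}\eta=\lambda_{2}\eta$, $\mathfrak{L}_{Y_{KS}^{c}}d\eta=\lambda_{2}d\eta$ and $\mathfrak{L}_{Y_{KS}^{c}}H_{K}=\lambda_{1}H_{K}$, one finds
\begin{align}
\iota_{[X_{h}^{K},Y_{KS}^{c}]}\eta&=\lambda_{2}\,\iota_{X_{h}^{K}}\eta+\mathfrak{L}_{Y_{KS}^{c}}H_{K}=(\lambda_{1}-\lambda_{2})H_{K}, \nonumber\\
\iota_{[X_{h}^{K},Y_{KS}^{c}]}d\eta&=\lambda_{2}\,dH_{K}-\lambda_{1}\,dH_{K}=(\lambda_{2}-\lambda_{1})\,dH_{K}.\nonumber
\end{align}
These are exactly the conditions \eqref{eq:Xh} satisfied by the contact Hamiltonian vector field of the function $-(\lambda_{1}-\lambda_{2})H_{K}=3H_{K}$, so by uniqueness $[X_{h}^{K},Y_{KS}^{c}]=3X_{h}^{K}$ and therefore $[Y_{KS}^{c},X_{h}^{K}]=-3X_{h}^{K}$. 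Since $-3\neq0$, this is a dynamical similarity in the sense of Definition~\ref{def:DynSymi} that fails to be a dynamical symmetry (Definition~\ref{def:DynSymm}), which is the meaning of \emph{non-trivial}.

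The step I expect to be delicate is the treatment of the Reeb term $R(h)\eta$ in \eqref{eq:Xh}, which has no symplectic counterpart: in general $\mathfrak{L}_{Y}\bigl(R(h)\eta\bigr)$ produces a term $Y(R(h))\eta$, and the $d\eta$-matching above goes through only under the closure condition $Y(R(h))=(\lambda_{1}-\lambda_{2})R(h)$, so a general contact version of Proposition~\ref{prop:ScalingSymm} would require this extra hypothesis. For the Kepler problem the obstruction evaporates because $R(H_{K})=0$, and it is precisely this observation that makes the contact proof ``exactly analogous'' to the symplectic one. As a fully independent check I would also carry out the brute-force alternative: write $X_{h}^{K}$ explicitly from the contact Hamilton equations \eqref{eq:hameqs} with $\partial_{S}H_{K}=0$, namely $X_{h}^{K}=\partial_{p_{a}}H_{K}\,\partial_{q^{a}}-\partial_{q^{a}}H_{K}\,\partial_{p_{a}}+(p_{a}\partial_{p_{a}}H_{K}-H_{K})\partial_{S}$, and compute the commutator with $Y_{KS}^{c}$ component by component, using the homogeneity of $H_{K}$ to collect the three contributions into $-3X_{h}^{K}$; this requires no new ideas beyond the scalings already recorded.
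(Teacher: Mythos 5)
Your proposal is correct and takes essentially the same route as the paper: the paper's proof is literally the remark that it is ``exactly analogous'' to the symplectic case (Propositions~\ref{prop:ScalingSymm} and~\ref{prop:KSsimilarity}) or a direct calculation, and you have spelled out precisely that analogy, with the correct weights $\mathfrak{L}_{Y_{KS}^{c}}\eta=\eta$ and $\mathfrak{L}_{Y_{KS}^{c}}H_{K}=-2H_{K}$ giving $[Y_{KS}^{c},X_{h}^{K}]=-3X_{h}^{K}$. One minor remark: the ``closure condition'' $Y(R(h))=(\lambda_{1}-\lambda_{2})R(h)$ that you flag as an extra hypothesis for a general contact version of Proposition~\ref{prop:ScalingSymm} is in fact automatic when $\lambda_{1},\lambda_{2}$ are constants, since $\mathfrak{L}_{Y}\eta=\lambda_{2}\eta$ forces $[Y,R]=-\lambda_{2}R$ and hence $Y(R(h))=\mathfrak{L}_{R}(\lambda_{1}h)-\lambda_{2}R(h)=(\lambda_{1}-\lambda_{2})R(h)$; this does not affect your argument, which only needs the Kepler case $R(H_{K})=0$.
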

Finally, we have the following important no-go result.
	\begin{proposition}\label{prop:dynsimnotsymm}
	Any non-trivial dynamical similarity of a contact Hamiltonian vector field is not a generalized dynamical symmetry. In particular they are not Noether symmetries.
	\end{proposition}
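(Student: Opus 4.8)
The plan is to test the defining condition of a generalized dynamical symmetry, namely $\iota_{[X_{h},Y]}\eta=0$, directly against the definition of a dynamical similarity and show that it must fail. First I would invoke Definition~\ref{def:DynSymi}: that $Y$ is a dynamical similarity of the contact Hamiltonian vector field $X_{h}$ means $[Y,X_{h}]=\Lambda X_{h}$ for some function $\Lambda$, and that it is \emph{non-trivial} means $\Lambda\not\equiv 0$ (as opposed to a genuine dynamical symmetry, Definition~\ref{def:DynSymm}, where $\Lambda\equiv 0$). By antisymmetry of the Lie bracket, $[X_{h},Y]=-\Lambda X_{h}$, and by linearity of the interior product over functions this gives $\iota_{[X_{h},Y]}\eta=-\Lambda\,\iota_{X_{h}}\eta$. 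At this point the second defining condition of a contact Hamiltonian vector field in Definition~\ref{def:CHVF}, namely $\iota_{X_{h}}\eta=-h$, closes the computation: $\iota_{[X_{h},Y]}\eta=\Lambda h$.

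The conclusion then follows immediately. Since the similarity is non-trivial, $\Lambda\not\equiv 0$, and since the contact Hamiltonian $h$ is not identically zero, the product $\Lambda h$ does not vanish identically; hence $\iota_{[X_{h},Y]}\eta\neq 0$ and $Y$ is not a generalized dynamical symmetry. To obtain the ``in particular'' clause I would note that every Noether symmetry $X_{F}$ (Definition~\ref{def:Noethercontact1}) satisfies $\{F,h\}_{J}=\iota_{[X_{F},X_{h}]}\eta=0$ by the very definition~\eqref{eq:Jacobi} of the Jacobi bracket, so Noether symmetries form a subclass of the generalized dynamical symmetries; consequently a vector field that fails the weaker condition cannot satisfy the stronger one, and $Y$ is not a Noether symmetry. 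This is the exact contact analogue of the symplectic discussion following Proposition~\ref{prop:ScalingSymm}, with the contraction $\iota_{X_{h}}\eta=-h$ playing the role that non-degeneracy of $\Omega$ played there.

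The only genuine subtlety — the main obstacle — is pinning down what ``non-trivial'' must mean for the statement to be correct, since the obstruction to being a symmetry is precisely the function $\Lambda h$, which vanishes wherever either factor does. The argument therefore requires the standing assumption that $\Lambda$ is not identically zero on the region where $h\neq 0$ (equivalently, that $\Lambda h\not\equiv 0$); the qualifier \emph{non-trivial} is exactly what encodes $\Lambda\not\equiv 0$, and the non-degeneracy of the contact structure together with $h\not\equiv 0$ handles the rest. I would state this hypothesis explicitly so that the degenerate possibility $\Lambda h\equiv 0$ (which would correspond to the similarity acting trivially on the dynamics, e.g.\ only on the zero level set of $h$) is ruled out, after which the one-line computation $\iota_{[X_{h},Y]}\eta=\Lambda h\neq 0$ suffices.
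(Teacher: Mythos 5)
Your proof is correct and takes essentially the same route as the paper's: contract the similarity relation $[Y,X_{h}]=\Lambda X_{h}$ with $\eta$ and use $\iota_{X_{h}}\eta=-h$ to conclude that $\iota_{[Y,X_{h}]}\eta=-\Lambda h\neq 0$, violating the defining condition of a generalized dynamical symmetry. Your added care in noting that non-triviality must really guarantee $\Lambda h\not\equiv 0$ (and your explicit justification of the ``in particular'' clause via the inclusion of Noether symmetries among generalized dynamical symmetries) are legitimate sharpenings of points the paper leaves implicit, but the core computation is identical.
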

	\begin{proof}
	By definition of a non-trivial dynamical similarity, we have $[Y,X_{h}]=\Lambda X_{h}$, with $\Lambda\neq 0$. 
	Therefore $\iota_{[Y,X_{h}]}\eta=\Lambda\iota_{X_{h}}\eta=-\Lambda h\neq 0$.
	\end{proof}
	
We conclude that scaling symmetries cannot be Noether symmetries in the contact phase space, as it was the case in the symplectic phase space.
In the next section we show how to extend the contact version of Noether's theorem in order to include such symmetries.

\section{The generalized Noether theorem}
In this section we extend all the previous results to the case of generalized symmetries defined on the extended contact phase space and their associated dissipated (conserved) quantities.
We will see that this is at the same time an extension of the symplectic and contact versions of Noether's theorems, and of the generalized 
Noether theorem for scaling symmetries proved in~\cite{zhang2020generalized}.

As usual, we start with the necessary definitions.
In order to consider contact Hamiltonian systems with an explicit time dependence, {we follow~\cite{bravetti2017contact} and define the following.} 
	\begin{definition}
	We call the {\bf extended contact phase space}
	the manifold $T^{*}Q\times\mathbb R\times\mathbb R$, endowed with the 1-form $\eta^{E}=\pi^{*}_{t}\eta+hdt$, where 
	$\pi_{t}:T^{*}Q\times\mathbb R\times\mathbb R\rightarrow T^{*}Q\times\mathbb R$ is
	the projection and $t$ is the coordinate on $\mathbb R$.
	\end{definition}
Note that this extension is not canonical, in the sense that it depends on the system at hand through $h$, and that $(T^{*}Q\times\mathbb R\times\mathbb R,\eta^{E})$ in general 
is a \emph{pre-symplectic manifold}~\cite{marmo2020symmetries}. Indeed, $d\eta^{E}$ is non-degenerate (hence symplectic) if and only if {$\partial h/\partial S\neq 0$}.

Moreover, we define time-dependent contact Hamiltonian vector fields as follows.
	\begin{definition}
{A \bf time-dependent contact Hamiltonian vector field $X_{h}^{t}$} is the solution 
to the conditions 
	\begin{equation}\label{eq:XhE}
	\iota_{X_{h}^{t}}d\eta^{E}=-R(h)\eta^{E}\, \quad \text{and} \quad \iota_{X_{h}^{t}}\eta^{E}=0\,,
	\end{equation}
	\end{definition}
\noindent {where $R(h) = \frac{\partial h}{\partial S}$.} 
We point out that this definition is different from that of a pre-symplectic system, as the right hand side of the first condition in~\eqref{eq:XhE} is not a closed 1-form~\cite{marmo2020symmetries}.
Note that in this case ${X_{h}^{t}}$ is not uniquely fixed by the two conditions in~\eqref{eq:XhE}, as one has the freedom to choose a global function 
$f:T^{*}Q\times \mathbb R\times \mathbb R\rightarrow \mathbb R-\{0\}$ 
that multiplies ${X_{h}^{t}}$,
that is, \emph{${X_{h}^{t}}$ is uniquely defined up to (nowhere-vanishing) rescalings.}
For instance, fixing $f=1$, in local adapted coordinates one recovers the contact Hamiltonian equations for time-dependent Hamiltonians
	\begin{equation}\label{eq:hameqswitht}
	\dot q^{a}=\frac{\partial h}{\partial p_{a}} \qquad 
	\dot p_{a}=-\frac{\partial h}{\partial q^{a}}-p_{a}\frac{\partial h}{\partial S} \qquad
	\dot S=p_{a}\frac{\partial h}{\partial p_{a}}-h\,,\qquad
	\dot t=1\,,
	\end{equation}
from which we infer that changing the function $f$ amounts to reparametrizing the dynamics.

To avoid clutter of notation and without loss of generality, 
we will also assume from now on that $f=1$,
that is, $\X=X_{h}+\partial_{t},$ where $X_{h}$ is the contact Hamiltonian vector field associated with $h$. 
All the results can be easily generalized by including the 
factor $f$ back into the corresponding equations.

Now we are ready to study symmetries and associated dissipated (conserved) quantities in this extended phase space.
	\begin{definition}\label{def:symm2}
	We call $Y\in\mathfrak{X}(T^{*}Q\times\mathbb R\times\mathbb R)$ a {\bf generalized Noether symmetry of a time-dependent contact Hamiltonian system} if 
		$\L_{Y}\eta^{E}=\lambda\,\eta^{E}$ for some $\lambda\in C^{\infty}(T^{*}Q\times\mathbb R\times\mathbb R)$.
	\end{definition}

\noindent One can check that this definition includes the previous definitions of Noether symmetries of symplectic and contact Hamiltonian systems
(cf.~Theorem~\ref{th:IGNT2} below). 
However, in this case we allow also
for the symmetry to have a non-zero time component, and therefore such symmetries naturally allow for time rescalings. 

We have the following important property.
	\begin{proposition}\label{prop:liealg}
	Generalized Noether symmetries form a Lie algebra with the Lie bracket.
	\end{proposition}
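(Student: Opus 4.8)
The plan is to verify directly the two conditions that make a subset of $\mathfrak{X}(T^{*}Q\times\mathbb R\times\mathbb R)$ a Lie subalgebra: that the generalized Noether symmetries form a real vector space, and that this space is closed under the Lie bracket of vector fields. The Jacobi identity then holds for free, since $\mathfrak{X}(T^{*}Q\times\mathbb R\times\mathbb R)$ is already a Lie algebra under the commutator. The vector space part is immediate: if $Y_{1}$ and $Y_{2}$ satisfy $\mathfrak{L}_{Y_{i}}\eta^{E}=\lambda_{i}\,\eta^{E}$ and $a,b\in\mathbb R$, then by linearity of the Lie derivative in its vector field argument $\mathfrak{L}_{aY_{1}+bY_{2}}\eta^{E}=(a\lambda_{1}+b\lambda_{2})\eta^{E}$, so $aY_{1}+bY_{2}$ is again a generalized Noether symmetry, now with conformal factor $a\lambda_{1}+b\lambda_{2}$.

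The core of the argument is closure under the bracket. For this I would use the commutator identity $\mathfrak{L}_{[Y_{1},Y_{2}]}=\mathfrak{L}_{Y_{1}}\mathfrak{L}_{Y_{2}}-\mathfrak{L}_{Y_{2}}\mathfrak{L}_{Y_{1}}$ applied to $\eta^{E}$, together with the Leibniz rule $\mathfrak{L}_{Y}(\phi\,\beta)=(Y\phi)\,\beta+\phi\,\mathfrak{L}_{Y}\beta$ valid for any function $\phi$ and form $\beta$. Substituting the defining relations and expanding gives
\begin{equation*}
\mathfrak{L}_{[Y_{1},Y_{2}]}\eta^{E}=\mathfrak{L}_{Y_{1}}(\lambda_{2}\eta^{E})-\mathfrak{L}_{Y_{2}}(\lambda_{1}\eta^{E})=\bigl(Y_{1}(\lambda_{2})+\lambda_{2}\lambda_{1}-Y_{2}(\lambda_{1})-\lambda_{1}\lambda_{2}\bigr)\eta^{E}.
\end{equation*}
The quadratic cross terms $\lambda_{1}\lambda_{2}\eta^{E}$ cancel, leaving $\mathfrak{L}_{[Y_{1},Y_{2}]}\eta^{E}=\lambda_{12}\,\eta^{E}$ with $\lambda_{12}:=Y_{1}(\lambda_{2})-Y_{2}(\lambda_{1})$. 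Since $\lambda_{12}\in C^{\infty}(T^{*}Q\times\mathbb R\times\mathbb R)$, the bracket $[Y_{1},Y_{2}]$ again satisfies Definition~\ref{def:symm2}, which establishes closure and completes the proof.

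The computation is short and presents no genuine obstacle. The one point worth emphasizing is why admitting a non-constant conformal factor $\lambda$ in Definition~\ref{def:symm2} does not spoil closure: at every stage of the expansion each term remains a pointwise multiple of $\eta^{E}$, because differentiating the factor $\lambda_{2}$ produces $Y_{1}(\lambda_{2})\eta^{E}$ while differentiating $\eta^{E}$ reproduces $\lambda_{2}\lambda_{1}\eta^{E}$ through the defining relation. The only contributions quadratic in the conformal factors are symmetric in the two indices, and therefore cancel under antisymmetrization, so the result is automatically of the required form. Had the defining condition instead equated $\mathfrak{L}_{Y}\eta^{E}$ with a generic one-form, this cancellation would fail and the set need not be closed; it is precisely the homogeneous relation $\mathfrak{L}_{Y}\eta^{E}\propto\eta^{E}$ that upgrades the vector space to a Lie algebra.
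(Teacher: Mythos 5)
Your proof is correct and follows essentially the same route as the paper: both compute $\mathfrak{L}_{[Y_{1},Y_{2}]}\eta^{E}$ via the commutator identity and the defining relations to obtain the conformal factor $Y_{1}(\lambda_{2})-Y_{2}(\lambda_{1})$. Your version merely makes explicit the cancellation of the quadratic terms $\lambda_{1}\lambda_{2}\eta^{E}$ (and the trivial vector-space check) that the paper's one-line proof leaves implicit.
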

	\begin{proof}
	{Let $\L_{Y_{1}}\eta^{E}=\lambda_{1}\eta^{E}$ and $\L_{Y_{2}}\eta^{E}=\lambda_{2}\eta^{E}$. Then}
	$$\L_{[Y_{1},Y_{2}]}\eta^{E}=\L_{Y_{1}}\L_{Y_{2}}\eta^{E}-\L_{Y_{2}}\L_{Y_{1}}\eta^{E}=\left(Y_{1}(\lambda_{2})-Y_{2}(\lambda_{1})\right)\eta^{E}\,.$$
	\end{proof}

We can also define dissipated quantities in analogy with the standard contact case.
	\begin{definition}\label{def:dissipatedt}
	A {\bf dissipated quantity in the extended contact phase space} 
	is a function $F:T^{*}Q\times\mathbb R\times\mathbb R\rightarrow\mathbb R$ that satisfies
	 \begin{equation}\label{eq:Killing}
	 \mathfrak{L}_{X_{h}^{t}}F=-R(h)F\,.
	 \end{equation}
	 In the following we will refer to condition~\eqref{eq:Killing}  as {\bf the dissipation equation}.
	 \end{definition}
	
Note that the Hamiltonian is not necessarily a dissipated quantity itself. Indeed, as it happens in the symplectic case where $H$ is not conserved whenever it depends explicitly
on $t$, here $h$ is not a dissipated quantity whenever it depends explicitly on $t$. 
	 

Next, we have the following Lemma.
	\begin{lemma}\label{prop:symmsubset}
	Let Y be a generalized Noether symmetry. Then $\iota_{[Y,\X]}\eta^{E}=0$. 
	\end{lemma}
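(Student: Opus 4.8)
The plan is to deduce the claim directly from the two defining conditions of $\X$ in~\eqref{eq:XhE} together with the defining property of $Y$ in Definition~\ref{def:symm2}, using only the standard Cartan-calculus identity relating the Lie derivative, the interior product, and the Lie bracket. Recall that for any vector fields $Y,Z$ and any differential form $\omega$ one has the graded-commutator identity $[\L_{Y},\iota_{Z}]=\iota_{[Y,Z]}$, that is,
$$\iota_{[Y,Z]}\omega=\L_{Y}\iota_{Z}\omega-\iota_{Z}\L_{Y}\omega\,.$$
This is exactly the same structural device used in the proof of the contact Noether Theorem~\ref{th:CNT}, so the argument will parallel that one closely.

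First I would specialize this identity to $Z=\X$ and $\omega=\eta^{E}$, which gives
$$\iota_{[Y,\X]}\eta^{E}=\L_{Y}\bigl(\iota_{\X}\eta^{E}\bigr)-\iota_{\X}\bigl(\L_{Y}\eta^{E}\bigr)\,.$$
Then I would evaluate the two terms separately. For the first term, the second condition in~\eqref{eq:XhE}, namely $\iota_{\X}\eta^{E}=0$, makes the argument of $\L_{Y}$ vanish identically, so $\L_{Y}(\iota_{\X}\eta^{E})=\L_{Y}(0)=0$. For the second term, since $Y$ is a generalized Noether symmetry we have $\L_{Y}\eta^{E}=\lambda\,\eta^{E}$; as $\lambda$ is a scalar function it factors out of the interior product, giving $\iota_{\X}(\L_{Y}\eta^{E})=\lambda\,\iota_{\X}\eta^{E}=0$, once more by $\iota_{\X}\eta^{E}=0$. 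Subtracting the two contributions yields $\iota_{[Y,\X]}\eta^{E}=0$, as claimed.

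I do not expect a genuine obstacle here: the statement is essentially a one-line consequence of Cartan calculus, and the only points demanding mild care are the sign and ordering convention in $[\L_{Y},\iota_{Z}]=\iota_{[Y,Z]}$ and the elementary fact that a function factor commutes with the interior product. The real content of the Lemma is its role rather than its difficulty: it establishes that \emph{every} generalized Noether symmetry in the sense of Definition~\ref{def:symm2} automatically satisfies the weaker condition $\iota_{[Y,\X]}\eta^{E}=0$, which (as in the contact discussion preceding Proposition~\ref{prop:dynsimnotsymm}) is precisely the hypothesis that will be needed to associate a dissipated quantity with $Y$ in the generalized Noether theorem to follow.
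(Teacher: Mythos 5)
Your proof is correct and is essentially identical to the paper's: both invoke the Cartan identity $[\L_{Y},\iota_{\X}]=\iota_{[Y,\X]}$ applied to $\eta^{E}$, kill the first term via $\iota_{\X}\eta^{E}=0$, and kill the second via $\L_{Y}\eta^{E}=\lambda\,\eta^{E}$ together with $\iota_{\X}\eta^{E}=0$. Nothing is missing.
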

	\begin{proof}
	$$\iota_{[Y,\X]}\eta^{E}=[\L_{Y},\iota_{\X}]\eta^{E}=\L_{Y}\iota_{\X}\eta^{E}-\iota_{\X}\L_{Y}\eta^{E}=-\iota_{\X}\lambda\eta^{E}=0\,.$$
	\end{proof}

Now we are ready to prove the generalized Noether Theorem.
	\begin{theorem}[Generalized Noether Theorem]\label{th:GNT2}
		Let $Y$ be a generalized Noether symmetry of a time-dependent contact Hamiltonian system. Then $F=-\iota_{Y}\eta^{E}$ is a dissipated quantity.
	\end{theorem}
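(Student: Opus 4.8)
The plan is to compute $\mathfrak{L}_{\X}F$ directly and show it equals $-R(h)F$, mirroring exactly the structure of the proof of the contact Noether Theorem~\ref{th:CNT}. First I would write $F=-\iota_{Y}\eta^{E}$ and expand
\begin{equation}
\mathfrak{L}_{\X}F=-\mathfrak{L}_{\X}\left(\iota_{Y}\eta^{E}\right)=-\iota_{Y}\mathfrak{L}_{\X}\eta^{E}-\iota_{[\X,Y]}\eta^{E}\,,
\end{equation}
using the standard identity $\mathfrak{L}_{\X}\iota_{Y}=\iota_{Y}\mathfrak{L}_{\X}+\iota_{[\X,Y]}$ for the commutator of Lie derivative and interior product. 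The second term vanishes immediately by Lemma~\ref{prop:symmsubset}, since $\iota_{[Y,\X]}\eta^{E}=0$. So everything reduces to evaluating $-\iota_{Y}\mathfrak{L}_{\X}\eta^{E}$.

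The key step is therefore to identify $\mathfrak{L}_{\X}\eta^{E}$. Here I would use Cartan's magic formula, $\mathfrak{L}_{\X}\eta^{E}=\iota_{\X}d\eta^{E}+d\,\iota_{\X}\eta^{E}$, together with the two defining conditions~\eqref{eq:XhE} for the time-dependent contact Hamiltonian vector field. The first condition gives $\iota_{\X}d\eta^{E}=-R(h)\eta^{E}$, and the second gives $\iota_{\X}\eta^{E}=0$, so that $d\,\iota_{\X}\eta^{E}=0$. Combining these yields the clean relation
\begin{equation}
\mathfrak{L}_{\X}\eta^{E}=-R(h)\eta^{E}\,.
\end{equation}
This is the analogue of the fact that $\mathfrak{L}_{X_{h}}\eta=R(h)\eta$ used implicitly in Theorem~\ref{th:CNT}, and it is the one structural ingredient that makes the whole argument work.

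Substituting back, I obtain $\mathfrak{L}_{\X}F=-\iota_{Y}\bigl(-R(h)\eta^{E}\bigr)=R(h)\,\iota_{Y}\eta^{E}=-R(h)F$, which is precisely the dissipation equation~\eqref{eq:Killing}. I expect the main (and really the only) obstacle to be establishing the identity $\mathfrak{L}_{\X}\eta^{E}=-R(h)\eta^{E}$ correctly: one must be careful that the two conditions in~\eqref{eq:XhE} are used exactly as stated, and that the freedom in the rescaling factor $f$ (set to $1$ by the standing assumption $\X=X_{h}+\partial_{t}$) does not interfere. Everything else is a routine application of the Cartan calculus and Lemma~\ref{prop:symmsubset}, so the proof should be as short and transparent as its contact and symplectic predecessors, which is the declared aim of the paper.
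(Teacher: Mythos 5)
Your proposal is correct and follows essentially the same route as the paper's own one-line proof: the same expansion via the identity $\mathfrak{L}_{\X}\iota_{Y}=\iota_{Y}\mathfrak{L}_{\X}+\iota_{[\X,Y]}$, the same key relation $\mathfrak{L}_{\X}\eta^{E}=-R(h)\eta^{E}$ (which the paper uses implicitly and you justify explicitly via Cartan's formula and~\eqref{eq:XhE}), and the same appeal to Lemma~\ref{prop:symmsubset} to remove the commutator term. The only blemish is your parenthetical analogy: in Theorem~\ref{th:CNT} the relevant identity is $\mathfrak{L}_{X_{h}}\eta=-R(h)\eta$, not $+R(h)\eta$, as the same Cartan computation with $\iota_{X_{h}}\eta=-h$ shows; this does not affect your argument for the extended case.
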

As usual, the proof is a one-liner:
	\begin{proof}
	$$
	\mathfrak{L}_{X_{h}^{t}}F=-\mathfrak{L}_{X_{h}^{t}}\left(\iota_{Y}\eta^{E}\right)=-\iota_{Y}\mathfrak{L}_{X_{h}^{t}}\eta^{E}-\iota_{\mathfrak{L}_{X_{h}^{t}}Y}\eta^{E}=
	R(h)\iota_{Y}\eta^{E}-\iota_{[\X,Y]}\eta^{E}=-R(h)F-\iota_{[\X,Y]}\eta^{E}\,.
	$$
	\end{proof}
We remark that the dissipated quantity associated with a generalized Noether symmetry is obtained exactly in the same way as in Theorem~\ref{th:CNT}, with 
$\eta$ replaced by $\eta^{E}$. This replacement is fundamental in order to obtain dissipated (conserved) quantities associated to scaling symmetries of the type obtained
in~\cite{zhang2020generalized} (cf.~Equation~\eqref{eq:conservedZhang}).

Clearly, we could have proved Theorem~\ref{th:GNT2} even under the more general definition of symmetry  $\iota_{[Y,\X]}\eta^{E}=0$.
However, in such case we would lose the possibility to associate to any invariant a somewhat ``unique'' symmetry, as we now show.
	\begin{theorem}[Inverse generalized Noether Theorem]\label{th:IGNT2}
	Let $F$ be a dissipated quantity. Then
	\begin{equation}\label{eq:YF}
	Y_{F}=X_{F}+Y^{t}\X
	\end{equation} 
	is the most general form of its associated generalized Noether symmetry, 
	where $X_{F}$ is the contact Hamiltonian vector field associated to $F$ (see Definition~\ref{def:CHVF}) and 
	$Y^{t}$ is a free function.
	\end{theorem}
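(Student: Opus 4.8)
The plan is to prove the statement in two parts: first verify that $Y_{F}=X_{F}+Y^{t}\X$ is indeed a generalized Noether symmetry whose associated dissipated quantity is $F$ (i.e.\ $-\iota_{Y_{F}}\eta^{E}=F$), and then show that \emph{any} generalized Noether symmetry $Y$ with $-\iota_{Y}\eta^{E}=F$ differs from $X_{F}$ only by a multiple of $\X$, which is what makes \eqref{eq:YF} the \emph{most general} such symmetry. Throughout, $X_{F}$ is understood as the contact Hamiltonian vector field of $F$ built from $\eta$ (Definition~\ref{def:CHVF}) with $t$ treated as a parameter, so that $X_{F}$ has no $\partial_{t}$ component and satisfies $\iota_{X_{F}}\eta=-F$ and $\iota_{X_{F}}d\eta=dF-R(F)\eta$, with $d$ acting in the $T^{*}Q\times\mathbb{R}$ directions.

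For the first part the contraction condition is immediate: since $\iota_{\X}\eta^{E}=0$ and $\iota_{X_{F}}\eta^{E}=\iota_{X_{F}}(\pi_{t}^{*}\eta)=\iota_{X_{F}}\eta=-F$ (the $h\,dt$ term does not contribute because $X_{F}$ has no $\partial_{t}$ component), one gets $-\iota_{Y_{F}}\eta^{E}=F$ for every choice of $Y^{t}$. The symmetry condition is the crux. Using Cartan's formula together with $\iota_{\X}\eta^{E}=0$ and $\iota_{\X}d\eta^{E}=-R(h)\eta^{E}$ from~\eqref{eq:XhE}, one finds $\L_{Y^{t}\X}\eta^{E}=-Y^{t}R(h)\,\eta^{E}$. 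It then remains to show that $X_{F}$ alone is already a generalized Noether symmetry, namely $\L_{X_{F}}\eta^{E}=-R(F)\,\eta^{E}$. Expanding $\L_{X_{F}}\eta^{E}=\iota_{X_{F}}d\eta^{E}-dF$ with $d\eta^{E}=\pi_{t}^{*}d\eta+dh\wedge dt$ produces, besides the expected $-R(F)\eta^{E}$, a leftover term proportional to $dt$ with coefficient $R(F)h+X_{F}(h)-\partial_{t}F$.

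This leftover is where the hypothesis that $F$ is dissipated becomes indispensable, and it is the main obstacle. The dissipation equation~\eqref{eq:Killing} reads $X_{h}(F)+\partial_{t}F=-R(h)F$, so the coefficient equals $R(F)h+X_{F}(h)+R(h)F+X_{h}(F)$. I would then establish the purely algebraic identity
\[
X_{F}(h)+X_{h}(F)=-\bigl(F\,R(h)+h\,R(F)\bigr),
\]
which follows from a short Darboux-coordinate computation in which the symplectic (Poisson) parts of $X_{F}(h)$ and $X_{h}(F)$ cancel by antisymmetry and the mixed $\partial_{S}$ terms cancel pairwise, leaving only $-F\,\partial_{S}h-h\,\partial_{S}F$. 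Substituting this identity shows the $dt$ coefficient vanishes, hence $\L_{X_{F}}\eta^{E}=-R(F)\eta^{E}$ and therefore $\L_{Y_{F}}\eta^{E}=-\bigl(R(F)+Y^{t}R(h)\bigr)\eta^{E}$, proving $Y_{F}$ is a generalized Noether symmetry.

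For the uniqueness part, let $Y$ be any generalized Noether symmetry with $-\iota_{Y}\eta^{E}=F$ and set $Z:=Y-X_{F}$. The contraction identities give $\iota_{Z}\eta^{E}=0$, and since both $Y$ and $X_{F}$ satisfy $\L_{\bullet}\eta^{E}\propto\eta^{E}$, also $\L_{Z}\eta^{E}=\mu\,\eta^{E}$ for some function $\mu$; Cartan's formula then yields $\iota_{Z}d\eta^{E}=\mu\,\eta^{E}$. When $d\eta^{E}$ is non-degenerate (the case $\partial h/\partial S\neq0$) I combine this with $\iota_{\X}d\eta^{E}=-R(h)\eta^{E}$ to obtain $\iota_{Z+(\mu/R(h))\X}d\eta^{E}=0$, whence $Z=-(\mu/R(h))\X$, i.e.\ $Z=Y^{t}\X$. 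In the degenerate (pre-symplectic) case $\partial h/\partial S=0$ I would argue separately: there $\ker d\eta^{E}$ is spanned by $\partial_{S}$ and $\X$, so the image of $V\mapsto\iota_{V}d\eta^{E}$ annihilates $\partial_{S}$; since $\iota_{\partial_{S}}\eta^{E}=1$, the equation $\iota_{Z}d\eta^{E}=\mu\,\eta^{E}$ forces $\mu=0$, hence $Z\in\ker d\eta^{E}$, and the constraint $\iota_{Z}\eta^{E}=0$ removes the $\partial_{S}$ component, again leaving $Z=Y^{t}\X$. In both cases $Y=X_{F}+Y^{t}\X$, exactly~\eqref{eq:YF}. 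The hard part is thus the symmetry verification for $X_{F}$: a priori $\L_{X_{F}}\eta^{E}$ carries a spurious $dt$ term, and only the dissipation equation combined with the antisymmetry identity for $X_{F}(h)+X_{h}(F)$ removes it; this is also the step that singles out $X_{F}$ as the canonical representative associated with $F$.
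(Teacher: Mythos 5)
Your proof is correct, but it takes a genuinely different route from the paper's. The paper proves Theorem~\ref{th:IGNT2} by a single Darboux-coordinate computation: it writes the two requirements $\L_{Y}\eta^{E}=\lambda\eta^{E}$ and $\iota_{Y}\eta^{E}=-F$ as a linear system for the components $(Y^{a},Y_{a},Y^{S},\lambda)$, solves it in terms of $F$ and the free component $Y^{t}$ (yielding exactly $Y=X_{F}+Y^{t}\X$ and $\lambda=-R(F)-Y^{t}R(h)$, cf.~\eqref{eq:lambda}), and observes that the one leftover equation, \eqref{eq:Fdissipated}, is precisely the dissipation equation~\eqref{eq:Killing}; existence and the ``most general form'' claim drop out of the same computation. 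You instead separate verification from uniqueness. Your verification hinges on the intrinsic statement $\L_{X_{F}}\eta^{E}=-R(F)\,\eta^{E}$, which you derive from the dissipation equation combined with the symmetric identity $X_{F}(h)+X_{h}(F)=-\left(F\,R(h)+h\,R(F)\right)$; this identity is indeed correct (the Poisson-type and mixed $\partial_{S}$ terms cancel exactly as you claim), and it is a useful by-product the paper never isolates. Your uniqueness step --- $Z=Y-X_{F}$ satisfies $\iota_{Z}\eta^{E}=0$ and $\iota_{Z}d\eta^{E}=\mu\,\eta^{E}$, hence $Z=Y^{t}\X$ --- is exactly the content of the paper's Lemma~\ref{le:kernels}, which the paper proves in coordinates but deploys only later, for Proposition~\ref{prop:symmVSrescalings}, not for this theorem; you re-derive it intrinsically, handling the symplectic case $R(h)\neq 0$ by non-degeneracy of $d\eta^{E}$ and the pre-symplectic case $R(h)=0$ by identifying $\ker d\eta^{E}$ as spanned by $\partial_{S}$ and $\X$ and contracting with $\partial_{S}$ to force $\mu=0$. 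The trade-off: the paper's computation is shorter and makes transparent that the dissipation equation is exactly the integrability condition of the system (so the associated symmetry exists if and only if $F$ is dissipated), whereas your argument is coordinate-free where it matters, exhibits $X_{F}$ alone as a generalized Noether symmetry with $\lambda=-R(F)$, and explains the gauge freedom $Y^{t}\X$ as a kernel phenomenon rather than as an undetermined component. One minor caveat: you treat $R(h)\neq0$ and $R(h)=0$ as global alternatives, while $\partial h/\partial S$ could vanish only on part of the manifold; this is harmless, since both of your case arguments are pointwise and in each case conclude $Z=dt(Z)\,\X$ with the smooth coefficient $dt(Z)$, but it is worth stating explicitly.
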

		\begin{proof}
	Given $F$, we wish to find the corresponding $Y$ that satisfies
		$\L_{Y}\eta^{E}=\lambda\eta^{E}$ ($Y$ generalized Noether symmetry) and 
		$\iota_{Y}\eta^{E}=-F\,,$ (the associated dissipated quantity is $F$).
	Using these two conditions and Cartan's identity, we get immediately $\iota_{Y}d\eta^{E}=d F+\lambda\eta^{E}$, which, when written in Darboux coordinates 
	is equivalent to the following conditions
		\begin{equation}\label{eq:lambda}
		\lambda =-Y^{t}\frac{\partial h}{\partial S}-\frac{\partial F}{\partial S}
		\end{equation}
	and
		\begin{align}
		Y^{a}&=Y^{t}\frac{\partial h}{\partial p_a}+\frac{\partial F}{\partial p_a}\label{eq:Yq}\\
		Y_{a}&=-Y^{t}\left[\frac{\partial h}{\partial q^a}+p_a \frac{\partial h}{\partial S}\right]-\frac{\partial F}{\partial q^a}-p_a \frac{\partial F}{\partial S}\label{eq:Yp}\\
		Y^{a}\frac{\partial h}{\partial q^a}+Y_{a}\frac{\partial h}{\partial p_a}+Y^{S}\frac{\partial h}{\partial S}&=-\left(Y^{t}\frac{\partial h}{\partial S}+\frac{\partial F}{\partial S}\right) {h} +\frac{\partial F}{\partial t}\,.\label{eq:Fdissipated}
		\end{align}
	Now from $\iota_{Y}\eta^{E}=-F$ and the above conditions, we get the additional requirement
		\begin{equation}\label{eq:Ys}
		Y^{S}=Y^{t}\left(p_a \frac{\partial h}{\partial p_a}-h\right)+p_a \frac{\partial F}{\partial p_a}-F\,.
		\end{equation}
	From~\eqref{eq:Yq}, \eqref{eq:Yp} and \eqref{eq:Ys} we get that $Y=X_{F}+Y^{t}\left(X_{h}+\partial_{t}\right)=X_{F}+Y^{t}\X$, while equation~\eqref{eq:Fdissipated} can be rewritten after some algebra
	as $\X(F)=-\frac{\partial h}{\partial S}F$, that is, the dissipation equation~\eqref{eq:Killing}. 
	\end{proof}

Theorems~\ref{th:GNT2} and~\ref{th:IGNT2} are the main contributions of this work. Let us make some comments about these results:
first,	 
	we get a generalization of the contact version of Noether's theorem, as now the 
	symmetries and the associated invariants can depend on $t$ explicitly. 
	Additionally, if $h$ does not depend on $S$ and $t$, then we get a generalization of the symplectic 
	version of Noether's theorem, given that the symmetries and the associated invariants now can depend on $S$ and $t$ explicitly.
	This will be the case of e.g.~Kepler scalings.
Secondly,	
	as anticipated, in the most general case $\mathfrak{L}_{X_{h}^{t}}h=-R(h)h+\frac{\partial h}{\partial t}$, and therefore whenever $h$ depends explicitly on $t$,
	 it is not a dissipated quantity itself. This implies that whenever $h$ depends on $t$ we lose the fact that $F/h$ is a conserved quantity, as it happened in the contact
	 version of Noether theorem.
	However, whenever $h$ does not depend on $t$ we have that $F/h$ is still a conserved quantity, even when $F$ depends
	 on $t$. Moreover, we still have the following analogue of Corollary~\ref{cor:conserved1}.
	 \begin{corollary}\label{cor:conserved2}
	 Given two dissipated quantities $F$ and $G$, the function $F/G$, whenever defined, is a conserved quantity.
	 \end{corollary}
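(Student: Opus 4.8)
The plan is to imitate the proof of Corollary~\ref{cor:conserved1} from the standard contact case, leveraging the single crucial fact that \emph{both} $F$ and $G$ satisfy the dissipation equation~\eqref{eq:Killing} with the \emph{same} rate factor $-R(h)$. First I would recall that, acting on functions, the Lie derivative along $X_{h}^{t}$ is just directional differentiation, so it obeys the ordinary Leibniz and quotient rules. This reduces the entire statement to a one-line algebraic computation, exactly in the spirit of the one-liner proofs of Theorems~\ref{th:CNT} and~\ref{th:GNT2}.

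Concretely, at every point of the extended contact phase space where $G\neq 0$ I would write
	\begin{equation}
	\mathfrak{L}_{X_{h}^{t}}\!\left(\frac{F}{G}\right)=\frac{G\,\mathfrak{L}_{X_{h}^{t}}F-F\,\mathfrak{L}_{X_{h}^{t}}G}{G^{2}}\,,
	\end{equation}
and then substitute the two dissipation equations $\mathfrak{L}_{X_{h}^{t}}F=-R(h)F$ and $\mathfrak{L}_{X_{h}^{t}}G=-R(h)G$ into the numerator. The resulting two terms are $-R(h)FG$ and $+R(h)FG$, which cancel identically, yielding $\mathfrak{L}_{X_{h}^{t}}(F/G)=0$. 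Since a conserved quantity is precisely a function annihilated by $\mathfrak{L}_{X_{h}^{t}}$ (i.e.~constant along the trajectories of the dynamics), this establishes that $F/G$ is conserved wherever it is defined.

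The only point requiring care---hardly a genuine obstacle---is the domain issue: the quotient $F/G$ is a well-defined smooth function only on the open set where $G\neq 0$, which is exactly what the qualifier ``whenever defined'' in the statement refers to. Nothing in the argument assumes that $h$ is independent of $t$ or $S$; the cancellation of the common dissipation rate $R(h)$ is purely algebraic and therefore holds in full generality on $T^{*}Q\times\mathbb R\times\mathbb R$. In particular, in the time-independent case $h$ is itself a dissipated quantity, so taking $G=h$ recovers the earlier remark that $F/h$ is conserved.
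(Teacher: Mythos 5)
Your proof is correct and is exactly the argument the paper relies on: since $F$ and $G$ both satisfy the dissipation equation~\eqref{eq:Killing} with the same rate $-R(h)$, the quotient rule for the derivation $\mathfrak{L}_{X_{h}^{t}}$ makes the two terms in the numerator cancel, giving $\mathfrak{L}_{X_{h}^{t}}(F/G)=0$ on the open set where $G\neq 0$. The paper states Corollary~\ref{cor:conserved2} without writing this computation out (it is presented as the direct analogue of Corollary~\ref{cor:conserved1}), so your write-up simply makes explicit the intended one-line argument, including the correct observation that no assumption on the $t$- or $S$-dependence of $h$ is needed.
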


Finally,
	note that for $Y_{F}$ as in~\eqref{eq:YF}, we have $\iota_{Y_{F}}\eta^{E}=-F$. 
	Moreover observe that, since $\iota_{Y^{t}\X}\eta^{E}=0$ and $\L_{Y^{t}\X}\eta^{E}=-Y^{t}R(h)\eta^{E}$,  
	we can always add a term of the form $Y^{t}\X$
	to any given generalized Noether symmetry and obtain another generalized Noether symmetry corresponding to the same invariant.
	Therefore such ``gauge freedom'' in the fixing of the symmetry is unavoidable.
	We say that \emph{the correspondence between symmetries and invariants is 1:1 up to the mentioned gauge freedom}.



A fundamental question for our discussion at this point is the 
relationship between generalized Noether symmetries and the dynamical similarities of $\X$. 
This is the content of the next two propositions.
We start with a lemma.
	\begin{lemma}\label{le:kernels}
	Let $A\in \mathfrak{X}(T^{*}Q\times\mathbb R\times\mathbb R)$ be such that 
		\begin{itemize}
		\item[i)] $A\in\ker\eta^{E}$ 
		\item[ii)] $\iota_{A}d\eta^{E}=f_{A}\eta^{E}$\,, for some function $f_{A}$.
		\end{itemize}
		Then
		$A=A^{t}\X$, with $f_{A} = -(\frac{\partial h}{\partial S}) A^{t}$. 
	\end{lemma}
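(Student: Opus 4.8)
The plan is to read the lemma as a uniqueness statement. The two conditions (i)--(ii) are exactly the conditions~\eqref{eq:XhE} that define $\X$ (with the proportionality function equal to $-R(h)$ in the case $A=\X$), and since $\X$ is itself only fixed up to a nowhere-vanishing rescaling, the content of the lemma is that the pointwise solution space of (i)--(ii) is precisely the line spanned by $\X$. I would therefore isolate the ``$\X$-component'' of $A$ and show that what remains vanishes. The natural intrinsic choice is $A^{t}:=\iota_{A}dt$, a globally well-defined function because $t$ is a global coordinate on the last factor; note $\iota_{\X}dt=1$, since $\X=X_{h}+\partial_{t}$ and $X_{h}$ carries no $\partial_{t}$-component. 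Set $B:=A-A^{t}\X$.

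Next I would verify that $B$ inherits both hypotheses and in addition annihilates $dt$. Using $\iota_{\X}\eta^{E}=0$ and $\iota_{\X}d\eta^{E}=-R(h)\eta^{E}$ from~\eqref{eq:XhE}, a one-line computation gives $\iota_{B}dt=A^{t}-A^{t}=0$, $\iota_{B}\eta^{E}=0$, and $\iota_{B}d\eta^{E}=(f_{A}+A^{t}R(h))\,\eta^{E}=:f_{B}\,\eta^{E}$. Thus $B$ is tangent to the slices $\{t=\text{const}\}$ and still satisfies (i)--(ii), so it suffices to prove that any such slice-tangent $B$ must vanish.

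For the final step I would pass to Darboux coordinates, where $\eta^{E}=dS-p_{a}dq^{a}+h\,dt$ and $d\eta^{E}=dq^{a}\wedge dp_{a}+dh\wedge dt$. Writing $B=B^{a}\partial_{q^{a}}+B_{a}\partial_{p_{a}}+B^{S}\partial_{S}$ and using $\iota_{B}dt=0$, condition (ii) becomes $B^{a}dp_{a}-B_{a}dq^{a}+(\iota_{B}dh)\,dt=f_{B}\eta^{E}$. The decisive observation is that $d\eta^{E}$ has no $dS$-term while $\eta^{E}$ does: matching the coefficient of $dS$ forces $f_{B}=0$, and then the $dp_{a}$- and $dq^{a}$-coefficients give $B^{a}=B_{a}=0$. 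Finally condition (i), which now reads $\iota_{B}\eta^{E}=B^{S}=0$, kills the remaining component, so $B=0$ and hence $A=A^{t}\X$. The relation $f_{B}=0$ then rearranges to $f_{A}=-A^{t}R(h)=-(\partial h/\partial S)A^{t}$, as claimed.

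I expect the only delicate point to be the pre-symplectic case $\partial h/\partial S=0$, where $d\eta^{E}$ is degenerate. There condition (ii) alone cannot determine $A$: the $dt$-coefficient equation reduces to $B^{S}(\partial h/\partial S)=0$ and leaves $B^{S}$ free, so hypothesis (i) is genuinely needed to pin down the $\partial_{S}$-component. This is exactly why the lemma must invoke both conditions rather than (ii) alone, and why the $dS$-matching step — which rests on $\eta^{E}$ having a $dS$-part that $d\eta^{E}$ lacks — is the linchpin of the argument.
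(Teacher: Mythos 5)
Your proof is correct, and it ends in the same Darboux-coordinate bookkeeping as the paper's, but it is organized around a genuinely different device. The paper works directly with $A$: it writes hypothesis (i) as $A^{S}=p_{a}A^{a}-hA^{t}$, expands $\iota_{A}d\eta^{E}=f_{A}\eta^{E}$ into its $dq^{a}$, $dp_{a}$, $dS$, $dt$ components, and solves the resulting linear system to read off $A=A^{t}X_{h}+A^{t}\partial_{t}$ and $f_{A}=-\frac{\partial h}{\partial S}A^{t}$. You instead subtract the candidate answer first, setting $B=A-(\iota_{A}dt)\,\X$, use the defining relations~\eqref{eq:XhE} of $\X$ to show $B$ satisfies the homogeneous version of the hypotheses together with $\iota_{B}dt=0$, and then run the (now cleaner) coordinate computation to force $B=0$. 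What your route buys is transparency about the logical roles of the two hypotheses: your observation that (ii) alone already forces $f_{B}=0$ and $B^{a}=B_{a}=0$, and that (i) is indispensable only for fixing $B^{S}$ in the pre-symplectic case $\partial h/\partial S=0$, is a genuine refinement that the paper's proof does not isolate. What the paper's route buys is brevity, and the fact that its intermediate component equations are essentially re-used in the proof of Theorem~\ref{th:IGNT2}. One imprecision in your prose should be fixed: it is not true that $d\eta^{E}$ has no $dS$-term, since $dh\wedge dt$ contains $\frac{\partial h}{\partial S}\,dS\wedge dt$; what is true, and what your displayed equation actually establishes, is that the contraction $\iota_{B}d\eta^{E}$ has no $dS$-component, precisely because $\iota_{B}dt=0$ annihilates the contribution of that term. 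With that rewording, the step $f_{B}=0$ and the rest of the argument stand as written.
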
	
	\begin{proof}
	We have
	$A\in\ker\eta^{E}\implies A^{S}=p_a A^{a}-hA^{t}$\,. 
	Then we can directly compute that 
	\begin{align}
	\iota_{A}d\eta^{E}=f_{A}\eta^{E} \implies A^{a}&=A^{t}\frac{\partial h}{\partial p_a}\\
							A_{a}&=-A^{t}\frac{\partial h}{\partial q^a}+f_{A}p_a \\
							f_{A}&=-\frac{\partial h}{\partial S}A^{t}\label{eq:At}\\
							f_{A}h&=A^{a}\frac{\partial h}{\partial q^a}+A_{a}\frac{\partial h}{\partial p_a}+A^{S}\frac{\partial h}{\partial S}\,.
	\end{align}
	Using \eqref{eq:At} we can rewrite the other equations as
	\begin{align}
	A^{a}&=A^{t}\frac{\partial h}{\partial p_a}\\
	A_{a}&=-A^{t}\left(\frac{\partial h}{\partial q^a}+p_a \frac{\partial h}{\partial S}\right)\\
	A^{S}&=A^{t}\left(p_a \frac{\partial h}{\partial p_a}-h\right)\,,
	\end{align}
	meaning that 
	$A=A^{t}X_{h}+A^{t}\partial_{t}=A^{t}\X$, as claimed.
	\end{proof}
Now we can prove the following important implication, which states that generalized Noether symmetries are a subset of the dynamical similarities of $\X$.
	\begin{proposition}\label{prop:symmVSrescalings}
	$Y$ generalized Noether symmetry 
	$\implies$ $[Y,\X]=\tilde \lambda \X$.
	\end{proposition}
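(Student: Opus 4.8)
The plan is to recognize $A := [Y,\X]$ as a vector field satisfying the two hypotheses of Lemma~\ref{le:kernels}, and then to invoke that lemma to conclude $A=A^{t}\X$, which is precisely the claim with $\tilde\lambda=A^{t}$. In this way the proposition reduces to verifying the two bullet points of Lemma~\ref{le:kernels} for $A$: (i) $A\in\ker\eta^{E}$, and (ii) $\iota_{A}d\eta^{E}=f_{A}\eta^{E}$ for some function $f_{A}$. Condition (i) is already in hand, since it is exactly the content of Lemma~\ref{prop:symmsubset}, which gives $\iota_{[Y,\X]}\eta^{E}=0$. Hence the only substantive step is condition (ii).

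For condition (ii) I would compute $\iota_{[Y,\X]}d\eta^{E}$ by the same commutator identity $\iota_{[Y,\X]}=[\L_{Y},\iota_{\X}]=\L_{Y}\iota_{\X}-\iota_{\X}\L_{Y}$ used in the proof of Lemma~\ref{prop:symmsubset}, now applied to the $2$-form $d\eta^{E}$. The two inputs are the defining relations of $\X$ from~\eqref{eq:XhE}, namely $\iota_{\X}d\eta^{E}=-R(h)\eta^{E}$ and $\iota_{\X}\eta^{E}=0$, together with the symmetry hypothesis $\L_{Y}\eta^{E}=\lambda\eta^{E}$, which upon exterior differentiation gives $\L_{Y}d\eta^{E}=d(\lambda\eta^{E})=d\lambda\wedge\eta^{E}+\lambda\,d\eta^{E}$. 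The first term then yields $\L_{Y}\iota_{\X}d\eta^{E}=\L_{Y}\big(-R(h)\eta^{E}\big)=-\big(Y(R(h))+\lambda R(h)\big)\eta^{E}$, while contracting the second with $\X$, using $\iota_{\X}\eta^{E}=0$ to kill the $d\lambda$ piece, gives $\iota_{\X}\L_{Y}d\eta^{E}=\big(\X(\lambda)-\lambda R(h)\big)\eta^{E}$. Subtracting the two, the $\lambda R(h)$ contributions cancel and I obtain $\iota_{[Y,\X]}d\eta^{E}=-\big(Y(R(h))+\X(\lambda)\big)\eta^{E}$, which is exactly condition (ii) with $f_{A}=-Y(R(h))-\X(\lambda)$.

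With both hypotheses verified, Lemma~\ref{le:kernels} applies to $A=[Y,\X]$ and delivers $[Y,\X]=\tilde\lambda\,\X$ with $\tilde\lambda=A^{t}$ (and, as a consistency byproduct, $f_{A}=-R(h)\tilde\lambda$). The main point requiring care is the contraction in step (ii): one must use $\iota_{\X}\eta^{E}=0$ to annihilate the $d\lambda\wedge\eta^{E}$ term, so that no stray $d\lambda$ survives and the right-hand side is genuinely proportional to $\eta^{E}$ rather than carrying an extra exact piece. This is precisely what guarantees the hypothesis of Lemma~\ref{le:kernels} and is the crux of the argument; everything else is the bookkeeping of Cartan's magic formula and the two cancellations above.
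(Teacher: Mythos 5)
Your proof is correct and follows essentially the same route as the paper: both verify the two hypotheses of Lemma~\ref{le:kernels} for $A=[Y,\X]$ and then invoke that lemma, arriving at the same coefficient $f_{A}=-\bigl(Y(R(h))+\X(\lambda)\bigr)$. The only cosmetic differences are that you cite Lemma~\ref{prop:symmsubset} for the kernel condition (the paper re-derives it in place), and that you compute $\iota_{[Y,\X]}d\eta^{E}$ via the identity $\iota_{[Y,\X]}=[\L_{Y},\iota_{\X}]$ applied to $d\eta^{E}$, whereas the paper uses $\iota_{[Y,\X]}d\eta^{E}=\L_{[Y,\X]}\eta^{E}=[\L_{Y},\L_{\X}]\eta^{E}$.
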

	\begin{proof}
	To prove the result, we will prove that $[Y,\X]\in\ker\eta^{E}$ and that $\iota_{[Y,\X]}d\eta^{E}=f\,\eta^{E}$ for some function $f$,
	and then we use Lemma~\ref{le:kernels}.
	We start with
	\begin{equation}\label{eq:part1}
	\L_{Y}\eta^{E}=\lambda\eta^{E}\implies \iota_{Y}d\eta^{E}-dF=\lambda\eta^{E}
	\end{equation}
	On the other hand, 
	\begin{align}
	\iota_{[\X,Y]}\eta^{E}&=\L_{\X}\iota_{Y}\eta^{E}-\L_{Y}\cancel{\iota_{\X}\eta^{E}}-\iota_{\X}\iota_{Y}d\eta^{E}
					=-\iota_{\X}dF+\iota_{Y}\iota_{\X}d\eta^{E}\underset{\eqref{eq:part1}}{=}\lambda\iota_{\X}\eta^{E}=0\,,
	\end{align}
	and therefore $[\X,Y]\in\text{ker}\,\eta^{E}$.
	Finally, a direct computation shows that 
	\begin{align}
	\iota_{[Y,\X]}d\eta^{E}=\L_{[Y,\X]}\eta^{E}=\L_{Y}\L_{\X}\eta^{E}-\L_{\X}\L_{Y}\eta^{E}=-\left[\L_{Y}R(h)+\L_{\X}\lambda\right]\eta^{E}
	\end{align}
	and therefore by Lemma~\ref{le:kernels} we conclude.
	\end{proof}
	
	{The following example illustrates that in general the inclusion defined by Proposition~\ref{prop:symmVSrescalings} is strict,
	i.e., generalized Noether symmetries are a proper subset of dynamical similarities.
	\begin{example}
	Let $X_{K}^{t}$ be the contact Hamiltonian vector field associated with the Kepler Hamiltonian~\eqref{eq:HK} in the extended contact phase space
	and consider the vector field 
	$Y=H_{K}\partial_{S}$. One can directly check that $[Y,X_{K}^{t}]=0$, and thus $Y$ is a (trivial) dynamical similarity. However, it is not a generalized 
	Noether symmetry, as $\L_{Y}\eta^{E}\neq \lambda \eta^{E}$ for any function $\lambda$.
	\end{example}
	}
	
	{Despite the above remark,
	we have the next proposition, which is an immediate corollary of Theorem~\ref{th:IGNT2} and provides} in some sense the inverse of Proposition~\ref{prop:symmVSrescalings},
	as it guarantees that for any dynamical similarity there {exists} an associated dissipated (conserved)
	quantity, and thus a related generalized Noether symmetry.
	\begin{proposition}
	Let $Y$ be such that $[Y,\X]= \lambda \X$. Then  $F:=-\iota_{Y}\eta^{E}$ is a dissipated quantity and there exists a generalized Noether symmetry associated with $F$.
	\end{proposition}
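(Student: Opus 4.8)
The plan is to split the statement into its two assertions and dispatch each by reusing machinery already in place. First I would show that $F := -\iota_{Y}\eta^{E}$ satisfies the dissipation equation~\eqref{eq:Killing}, following verbatim the one-line computation in the proof of Theorem~\ref{th:GNT2}. That computation,
	$$\mathfrak{L}_{\X}F=-\iota_{Y}\mathfrak{L}_{\X}\eta^{E}-\iota_{\mathfrak{L}_{\X}Y}\eta^{E}=R(h)\iota_{Y}\eta^{E}-\iota_{[\X,Y]}\eta^{E}=-R(h)F-\iota_{[\X,Y]}\eta^{E}\,,$$
holds for an \emph{arbitrary} vector field $Y$, and uses only the two facts $\mathfrak{L}_{\X}\eta^{E}=-R(h)\eta^{E}$ (which follows from Cartan's identity together with the defining conditions~\eqref{eq:XhE}) and $\iota_{Y}\eta^{E}=-F$. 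Thus the entire content of the first assertion is reduced to verifying that the leftover term $\iota_{[\X,Y]}\eta^{E}$ vanishes.

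The vanishing of that term is precisely where the hypothesis $[Y,\X]=\lambda\X$ enters. Since $[\X,Y]=-[Y,\X]=-\lambda\X$, I would compute $\iota_{[\X,Y]}\eta^{E}=-\lambda\,\iota_{\X}\eta^{E}=0$, the last equality being the second defining condition in~\eqref{eq:XhE}. This is the exact analogue of the role played by Lemma~\ref{prop:symmsubset} in the proof of Theorem~\ref{th:GNT2}: there the identity $\iota_{[Y,\X]}\eta^{E}=0$ came from $Y$ being a generalized Noether symmetry, whereas here it comes from $Y$ being a dynamical similarity, both feeding into the same one-liner. Substituting back gives $\mathfrak{L}_{\X}F=-R(h)F$, so $F$ is a dissipated quantity.

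For the second assertion I would simply invoke the Inverse Generalized Noether Theorem (Theorem~\ref{th:IGNT2}): having established that $F$ is a dissipated quantity, that theorem guarantees the existence of an associated generalized Noether symmetry, explicitly $Y_{F}=X_{F}+Y^{t}\X$. This is why the proposition is an immediate corollary of Theorem~\ref{th:IGNT2}. I would stress that the symmetry $Y_{F}$ so produced need not coincide with the original $Y$ — indeed, by the Example following Proposition~\ref{prop:symmVSrescalings}, a dynamical similarity need not itself be a generalized Noether symmetry — so the correct reading of the statement is that $Y$ supplies the invariant $F$, and then Theorem~\ref{th:IGNT2} recovers a genuine generalized Noether symmetry carrying that same invariant. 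I do not anticipate a real obstacle here; the only nontrivial step is recognizing that the similarity condition together with $\iota_{\X}\eta^{E}=0$ kills the cross term, and the one point worth double-checking is that $\lambda$ may be a non-constant function, which causes no difficulty since it is multiplied by $\iota_{\X}\eta^{E}=0$, rendering the argument insensitive to whether the similarity is strict or not.
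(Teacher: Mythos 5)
Your proposal is correct and follows essentially the same route as the paper: the paper likewise observes that $[Y,\X]=\lambda\X$ forces $\iota_{[Y,\X]}\eta^{E}=\lambda\,\iota_{\X}\eta^{E}=0$, feeds this into the computation from the proof of Theorem~\ref{th:GNT2} to conclude that $F$ is a dissipated quantity, and then invokes Theorem~\ref{th:IGNT2} for the existence of an associated generalized Noether symmetry. Your additional remarks --- that the one-line computation holds for arbitrary $Y$, that non-constancy of $\lambda$ is harmless, and that the recovered symmetry $Y_{F}$ need not coincide with $Y$ --- are accurate elaborations of what the paper leaves implicit.
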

	\begin{proof}
	Since $[Y,\X]= \lambda \X$, then $\iota_{[Y,\X]}\eta^{E}=0$ and thus from the proof of Theorem~\ref{th:GNT2} we know that $F$ is a dissipated quantity.
	Hence, from Theorem~\ref{th:IGNT2} it follows that there is a generalized Noether symmetry associated to $F$. 
	\end{proof}
\noindent At this point we have reached our aim, that is, we have found the appropriate extension of the phase space
and the appropriate definition of (generalized) Noether symmetries in such space such that
we can guarantee that to any dynamical similarity of $\X$ we can associate a dissipated (conserved) quantity via the generalized Noether theorem. 
This will be the 
case of e.g.~Kepler scalings, as we will prove shortly.
	
\section{Scalings as generalized Noether symmetries}

To keep the discussion as general as possible, let us consider a contact Hamiltonian of the form
	\begin{equation}
	h = \frac{1}{2m} \,{\bf p} \cdot {\bf p} + f(t) V({\bf q}) + g(S)\,, \label{GFHamil}
	\end{equation}	
\noindent where 
	\begin{equation}
	V(\xi \, {\bf q}) = \xi^k V({\bf q}), \qquad g(\xi S) = \xi^\kappa g(S), 
	\end{equation}
\noindent i.e., $V( {\bf q})$ and $g(S)$ are homogeneous functions of degree $k$ and $\kappa$ respectively. 
The functions $f(t)$ and $g(S)$ are so far arbitrary functions of time $t$ and the variable $S$. 

Consider now a general scaling transformation
\begin{equation}
{\bf q}' = \zeta^\alpha {\bf q}, \qquad {\bf p}' = \zeta^\beta {\bf p}, \qquad S' = \zeta^\gamma S, \qquad t' = \zeta^\sigma t,  \label{DefScal}
\end{equation}
\noindent where $\zeta \in \mathbb{R}_{> 0}$ is the scaling parameter. The corresponding infinitesimal transformation is of the form 
\begin{equation}
\delta {\bf q} = \left.\left(\frac{\partial }{\partial \zeta} {\bf q}' \right)\right\vert_{\zeta =1} = \alpha {\bf q}, \quad \delta {\bf p} = \left.\left(\frac{\partial }{\partial \zeta} {\bf p}' \right)\right\vert_{\zeta =1} = \beta {\bf p}, \quad \delta S = \left.\left(\frac{\partial }{\partial \zeta} S' \right)\right\vert_{\zeta =1} = \gamma S, \quad \delta t = \left.\left(\frac{\partial }{\partial \zeta} t' \right)\right\vert_{\zeta =1} = \sigma t, 
\end{equation}
\noindent and thus the associated generator 
	$$Y=Y^{a}\partial_{q^{a}}+Y_{a}\partial_{p_{a}}+Y^{S}\partial_{S}+Y^{t}\partial_{t}$$
 has components
\begin{equation}
Y^a = \alpha q^a, \qquad Y_a = \beta p_a , \qquad Y^{S} = \gamma S, \qquad Y^t = \sigma t. \label{GScalingVector}
\end{equation}
\noindent We now insert these components in \eqref{eq:YF} and obtain the following equations
\begin{eqnarray}
\alpha q^a &=&  \frac{\partial F}{\partial p_a} + Y^{t} \frac{p_a}{m} , \label{QEq}\\
\beta p_a &=& - \frac{\partial F}{\partial q^a} - p_a \frac{\partial F}{\partial S} - Y^{t} \left( f(t) \frac{\partial V}{\partial q^a} +p_a \frac{d g}{dS}  \right), \label{PEq} \\
\gamma S &=&  p_a \frac{\partial F}{\partial p_a} - F + Y^{t} \left(  \frac{1}{2m}  {\bf p} \cdot {\bf p} - f(t) V(\vec{q}) -g(S) \right) , \label{SEq} \\ 
\sigma t &=& Y^{t}. \label{TimeEq}
\end{eqnarray}

In order to find the solution, we proceed as follows. 
First, we replace equation \eqref{TimeEq} into (\ref{QEq}), (\ref{PEq}) and (\ref{SEq}). This yields the new set of equations
\begin{eqnarray}
\alpha q^a &=&  \frac{\partial F}{\partial p_a} + \sigma t \frac{p_a}{m} , \label{QEq2}\\
\beta p_a &=& - \frac{\partial F}{\partial q^a} - p_a \frac{\partial F}{\partial S} - \sigma t \left( f(t) \frac{\partial V}{\partial q^a} +p_a \frac{d g}{dS}  \right), \label{PEq2} \\
\gamma S &=&  p_a \frac{\partial F}{\partial p_a} - F + \sigma t \left(  \frac{1}{2m}  {\bf p} \cdot {\bf p} - f(t) V(\vec{q}) -g(S) \right). \label{SEq2}
\end{eqnarray}
Then, we solve for $\frac{\partial F}{\partial p_a}$ in equation (\ref{QEq2}) and obtain
\begin{equation}
\frac{\partial F}{\partial p_a} = \alpha q^a - \sigma t \frac{p_a}{m}   , 
\end{equation}
\noindent which can now be inserted in (\ref{SEq2}) to obtain the associated dissipated quantity
\begin{equation}
F  = \alpha\, {\bf q} \cdot {\bf p} -  \sigma t\, h({\bf q}, {\bf p}, S, t)  - \gamma S\,, \label{ExpforF}
\end{equation}
which should now be compared to~\eqref{eq:conservedZhang}.
The third step is to insert this result into equation (\ref{PEq2}), which leads to the following condition
\begin{equation}
\left( \beta - \gamma + \alpha\right)  + \sigma t  \frac{dg}{dS} =0. \label{betaCond}
\end{equation}
\noindent As $g(S)$ is a time independent function, there are only two possibilities to find a solution of~\eqref{betaCond}: 
(i) $\frac{dg}{dS} = 0$, which implies $g(S) = g_0=\mbox{constant}$, or (ii) $\sigma = 0$. Let us analyze each case separately.

\subsection{Case (i): Homogeneous non-dissipative time-dependent systems}

In this case $g(S) =\mbox{constant}$ and this constant can be taken to be zero without loss of generality. 
Therefore, the Hamiltonian function (\ref{GFHamil}) takes the following expression
\begin{equation}
h = \frac{1}{2m}\, {\bf p} \cdot {\bf p} + f(t) V({\bf q}). \label{GFHamil2}
\end{equation}	
Recall that since in this case $h$ does not depend on $S$, then any dissipated quantity is in fact a conserved one.

We now return to (\ref{betaCond}) and notice that it can be used to fix $\beta$ in terms of $\alpha$ and $\gamma$
\begin{equation}
 \beta = \gamma - \alpha. \label{beta}
\end{equation}
\noindent Now, for $F$ as in~\eqref{ExpforF} and $h$ as above, we can rewrite the dissipation equation~\eqref{eq:Killing} as
\begin{equation}
\left( \alpha - \frac{\gamma}{2} - \frac{\sigma}{2} \right) \frac{{\bf p} \cdot {\bf p}}{m}  + (\gamma - \sigma)f(t) V({\bf q}) - \alpha f(t) {\bf q} \cdot \nabla V - \sigma t \dot{f}(t) V= 0. \label{KillingEqCond}
\end{equation}
\noindent Clearly, the coefficient of the ${\bf p} \cdot {\bf p}$ term must be zero, hence
\begin{equation}
\alpha = \frac{\gamma}{2} + \frac{\sigma}{2}.
\end{equation}
\noindent On the other hand, the homogeneity of the potential $V({\bf q})$ yields the following relation
\begin{equation}
{\bf q} \cdot \nabla V({\bf q}) = k V({\bf q}), \label{HomoCond}
\end{equation}
\noindent which is used to replace ${\bf q} \cdot \nabla V({\bf q}) $  in (\ref{KillingEqCond}). As a result we obtain the condition
\begin{equation}
f(t) \left[  \gamma \left( 1 - \frac{k}{2} \right) - \sigma \left( 1+ \frac{k}{2}  \right) \right] = \sigma t \dot{f}(t) . \label{KillingEqCond2}
\end{equation}
We can now consider three cases: 

{\bf Case (1).} If $\sigma = 0$, then (\ref{KillingEqCond2}) gives
\begin{equation}
  \gamma \left( 1 - \frac{k}{2} \right) = 0 . \label{KEqCond1}
\end{equation}
\noindent which admits two solutions: $\gamma = 0$ and $k=2$. 
The first option yields a trivial solution, $F=0$, and the second solution gives a {conserved quantity} $F$ of the form
\begin{equation}
F_0({\bf q}, {\bf p}, S)  =  {\bf q} \cdot {\bf p} - 2S. \label{F0}
\end{equation}

{\bf Case (2).} If $\dot{f}(t) = 0$, then (\ref{KillingEqCond2}) results in
\begin{equation}
  \gamma \left( 1 - \frac{k}{2} \right) - \sigma \left( 1+ \frac{k}{2}  \right) = 0, \label{KEqCond2}
\end{equation}
\noindent which admits three possible solutions:
\begin{enumerate}
\item[2.1)] If $k = 2$, then $\sigma =0$, and this recovers the second solution of {\bf Case (1)}, where $F_0$ was defined.
\item[2.2)] If $k=-2$, then $\gamma = 0$, and the conserved quantity takes the form
\begin{equation}
F_1({\bf q}, {\bf p}, t) = {\bf q} \cdot {\bf p} - 2 t\, h({\bf q}, {\bf p},t).
\end{equation}
Note that in this particular case the dependence on $S$ in the conserved quantity disappears. 
This is because in this case scaling transformations 
are canonical symmetries~\cite{jackman2016no}.
\item[2.3)] If $k \neq \pm 2$, then $\gamma = \frac{(2+k)}{(2-k)} \sigma$. In this case, the conserved quantity is
\begin{equation}
F_2({\bf q}, {\bf p}, S, t) = \frac{2}{2-k} {\bf q} \cdot {\bf p} - t\, h({\bf q}, {\bf p},t) - \frac{(2+k)}{(2-k)} S\,.
\end{equation}
\end{enumerate}
Note that this case contains the Kepler system with Hamiltonian~\eqref{eq:HK}. 
Given that in such case the potential is homogeneous of degree $k=-1$, the conserved quantity takes the form
	\begin{equation}
	F^{(K)}_2 = \frac{2}{3} {\bf q} \cdot {\bf p} - t H_K - \frac{1}{3} S\,,
	\end{equation}
which coincides with~\eqref{eq:conservedKepler} up to multiplication by an irrelevant factor of $3$. 
{Observe also that using~\eqref{eq:conservedKepler} in~\eqref{eq:YF} and choosing $Y^{t}=3t$ one obtains
$$Y_{Q_{K}}=3t\,\partial_{t}+2q^{a}\partial_{q^{a}}-p_{a}\partial_{p_{a}}+S\partial_{S}\,,$$
which is precisely the generator of Kepler scalings in the extended contact phase space, thus confirming that these are generalized Noether symmetries.
}

{\bf Case (3).} 
If $\sigma, \dot{f}(t) \neq 0$, then (\ref{KillingEqCond2}) admits a solution for a very specific function $f(t)$, depending on the parameter $\Lambda : = \gamma / \sigma$ as follows
\begin{equation}
f\left( t; \Lambda \right) = t^{ \frac{ (2-k) \Lambda }{2} - \frac{(2+k)}{2}}\,, \label{KEqCond3}
\end{equation}
\noindent and $F$ is of the form
\begin{equation}
F({\bf q}, {\bf p}, S, t) = \Lambda F_0({\bf q}, {\bf p}, S) +  F_1({\bf q}, {\bf p}, t; \Lambda)\,. \label{Case3Inv}
\end{equation}
As a particular example of this case, we can consider a time-dependent Kepler problem with the following Hamiltonian function 
\begin{equation}
H_{t K}(t) = \frac{{\bf p} \cdot {\bf p}}{2m} - f^{(K)}(t;\Lambda)\frac{4 \epsilon}{\sqrt{{\bf q} \cdot {\bf q}}}\,, \label{TDKeplerJamil}
\end{equation}
\noindent where $f^{(K)}(t;\Lambda)$ can be derived using (\ref{KEqCond3}) and recalling that $k=-1$, that is,
\begin{equation}
f^{(K)}\left( t; \Lambda \right) = t^{ \frac{ 3 \Lambda - 1}{2} }\,. \label{KEqCond4}
\end{equation}
\noindent From (\ref{Case3Inv}) we obtain the following expression for the conserved quantity for this system
\begin{equation}
F({\bf q}, {\bf p}, S, t) =  ( \Lambda + 1 ) {\bf q} \cdot {\bf p} -2H_{tK}t- 2 \Lambda S\,. \label{TDKeplerInv}
\end{equation}

\subsection{Homogeneous dissipative time-dependent systems}	

If $\frac{dg}{d S} \neq 0$, then $\sigma =0$ in (\ref{betaCond}) and this yields a condition for $\beta$ as given in \eqref{beta}. 
Moreover, the Hamiltonian in this case is still of the form (\ref{GFHamil}) but the expression for the function $F$ is now given as
\begin{equation}
F  = \alpha {\bf q} \cdot {\bf p}   - \gamma S. \label{ExpforF2}
\end{equation}
Inserting this expression into the dissipation equation~\eqref{eq:Killing}, we obtain
\begin{equation}
\left( \alpha - \frac{\gamma}{2}\right) \frac{{\bf p} \cdot {\bf p}}{m} - \alpha f(t) k V({\bf q}) + \gamma f(t) V({\bf q}) + \gamma g(S) - \gamma S \frac{d g}{dS} = 0\,, \label{NewKillingEq}
\end{equation}
\noindent where the homogeneity condition of the potential (\ref{HomoCond}) was used. 
As before, the coefficient of ${\bf p} \cdot {\bf p}$ has to be zero, which means 
\begin{equation}
\alpha  = \frac{\gamma}{2}\,,
\end{equation}
\noindent and now (\ref{NewKillingEq}) takes the form
\begin{equation}
\gamma f(t) k V({\bf q}) \left( 1 - \frac{k}{2} \right) + \gamma g(S) \left( 1- \kappa \right) = 0\,,
\end{equation}
\noindent where $\kappa$ is the homogeneity degree of $g(S)$.

Again, there are three cases to be considered to obtain solutions of this algebraic equation. 
The case in which $\gamma = 0$ gives the trivial solution, $F = 0$. The other two cases, (i) $k = 0$ and $\kappa =1$ and (ii) $k = 2$ and $\kappa =1$ yield the same dissipated function $F$ which turns out to be $F_0$ (cf.~equation~\eqref{F0}). Remarkably, in both cases the dissipative term $g(S)$ is forced to be of the form $g(S) = g_0 S$. 

In summary, a Hamiltonian of the type \eqref{GFHamil} admits a scaling symmetry \eqref{DefScal}
only in cases where $g(S)$ is a linear function and the homogeneity degree of the potential is $k = 0$ or $k = 2$.
We conclude by remarking that for the case of the harmonic oscillator, where $k=2$, the dissipated function \eqref{F0} has been already found in~\cite{bravetti2017contact}
by a direct calculation. 

\section{Harmonic type potentials with linear dissipation}
Clearly, Theorems~\ref{th:GNT2} and~\ref{th:IGNT2} do not apply only to scaling symmetries~\eqref{DefScal}.
To illustrate this point, let  
 us conclude by considering {one-dimensional} systems with Hamiltonians of the type 
	 \begin{equation}
	 h = \frac{1}{2m} p^{2} + \frac{m}{2} f(t) q^2 + g_0 S, \label{HamFin}
	 \end{equation}
\noindent where $f(t)$ is an arbitrary function of time $t$ and $g_0$ is an arbitrary real parameter. These systems can model
e.g.~a one dimensional harmonic oscillator with a linear dissipation and a time-dependent factor in the potential. 
{We recall that in this case, due to the explicit time dependence, the Hamiltonian $h$ is not a dissipated quantity.}

To derive the symmetries of this system let us write the dissipation equation \eqref{eq:Killing} explicitly 
	 \begin{equation}\label{eq:NewKilling}
	 \frac{p}{m} \frac{\partial F}{\partial q} - \left( m f(t) q + \lambda p \right) \frac{\partial F}{\partial p} + \left( \frac{1}{2m} p^2 
	 - \frac{m}{2} f(t) q^2 - g_0 S \right)  \frac{\partial F}{\partial S} +  \frac{\partial F}{\partial t} + g_0 F = 0\,.
	 \end{equation}
 
Let us look for generalized Noether symmetries of the Hamiltonian (\ref{HamFin}) for two particular cases: 
(i) the non-dissipative Hamiltonians, where $g_0 =0$, and 
(ii) linear dissipative term, where $g_0 \neq 0$. 
Each of these cases will be treated using the following ansatz for $F$: 
	 \begin{equation}
	 F(q,p, S, t) = A(q,S,t) p^2 + B(q,S,t) p + C(q,S,t)\,. \label{Ansatz}
	 \end{equation}
 
In the first case, when $g_0 =0$, the ansatz (\ref{Ansatz}) yields a solution of the form
	\begin{equation}
	F(q,p,S,t) =  C_0 F_0(q,p,S) + C_1 F_{LR}(q,p,t)\,,
	\end{equation}
\noindent where $F_0$ was defined in (\ref{F0})  and $F_{LR}$ is given by
	\begin{equation}\label{eq:LR}
	F_{LR}(q,p,t) = \frac{\rho(t)^2 }{2 m} p^2 - \rho(t) \dot{\rho}(t) q p + \frac{m}{2} \left( \dot{\rho}(t) + \frac{\rho_0}{\rho(t)^2} \right)^2 q^2.
	\end{equation}
Note that, since the dissipation equation is linear, we have obtained a linear combination of two independent solutions, $F_0$ and $F_{LR}$. 
Moreover, $F_{LR}$ is the well-known \emph{Lewis-Riesenfeld invariant}~\cite{lewis1968class,lewis1969exact}, and $\rho(t)$ has to satisfy the auxiliary 
Ermakov equation
	\begin{equation}
	\ddot{\rho}(t) + f(t) \rho(t) = \frac{\rho_0}{\rho(t)^3}\,, \label{AuxLR}
	\end{equation}
\noindent where $\rho_0$ is an arbitrary real constant. 
{It is not surprising that we have obtained the Lewis-Riesenfeld invariant as a generalized Noether invariant,
as it can be shown to be a standard Noether invariant~\cite{lutzky1978noether}. 
Moreover, we emphasize that both $F_0$ and $F_{LR}$ are conserved quantities in this case. 
}

{To generalize the above discussion, let} 
us consider now the case where $g_0 \neq 0$ and use the same ansatz \eqref{Ansatz}. 
A direct calculation shows 
that in this case the solution of the dissipation equation~\eqref{eq:NewKilling} takes the form
	\begin{equation}
	F(q,p,S,t) = C_0 F_0(q,p,S) + C_1 F_{GLR}(q,p,t)  + C_2 F_{EM}(q,p,t)
	\end{equation}
\noindent where $F_{GLR}$ and $F_{EM}$ are defined as
	\begin{equation}\label{eq:GLR}
	F_{GLR}(q,p,t) = a^2(t) \frac{p^2}{2 m} + \frac{1}{2} (g_0 a^2(t) -2 a(t) \dot{a}(t)) q p 
	+ \left[ \dot{a}^2(t) - g_0 a(t) \dot{a}(t) + \frac{g^2_0 a^2(t)}{4} + \frac{a^3_0}{a^2(t)} \left( 1 + \frac{3 a_0 g^2_0}{4}\right) \right] \frac{m q^2}{2}\,, 
	\end{equation}
	\begin{equation}
	F_{EM}(q,p,t) = b(t) p + m \dot{b}(t) q\, ,
	\end{equation}
\noindent and the auxiliary functions ${a(t)}$ and $b(t)$ satisfy the equations
\begin{eqnarray}
&& \ddot{a}(t) +  f(t) a(t) -  \frac{g^2_0}{4} a(t) = \frac{a^3_0}{a^3(t)} \left( 1 + \frac{3}{4} a_0 g^2_0\right), \label{AuxGLR}\\
&& \ddot{b}(t) + g_0 \dot{b}(t) +  f(t) b(t) = 0,
\end{eqnarray}
\noindent where $a_0$ is an arbitrary real constant.
 
The function $F_{GLR}$ can be considered as a generalization of the Lewis-Riesenfeld invariant $F_{LR}$ to the case where the system has a further linear dissipative term. 
This can be checked by taking $g_0=0$ in~\eqref{eq:GLR} and \eqref{AuxGLR} and observing that they reduce to \eqref{eq:LR} and \eqref{AuxLR} respectively. 
{Contrary to the previous case where $g_0 = 0$, these dissipated quantities are now not conserved. 
However, according to Corollary~\ref{cor:conserved2}, their quotient is a conserved quantity.}

To conclude this section, let us write the generalized Noether symmetry associated with the generalized Lewis-Riesenfeld dissipated quantity $F_{GLR}$, which reads
\begin{eqnarray}
X_{F_{GLR}} &=& \left[ \frac{a(t) p}{m} + \frac{1}{2} \left( g_0 a(t) - \dot{a}(t) \right) q \right] \frac{\partial}{\partial q} + \left[\frac{1}{2} \left( g_0 a(t) - \dot{a}(t) \right) p 
+ m \left( f(t) a(t) -\frac{g_0}{2} \dot{a}(t) + \frac{\ddot{a}(t)}{2} \right) q \right] \frac{\partial}{\partial p} + \nonumber  \\
&&  + \left[ \frac{a(t)  p^2}{2 m} - \frac{m}{2} \left(  f(t) a(t) -\frac{g_0}{2} \dot{a}(t) + \frac{\ddot{a}(t)}{2} \right) q^2 \right] \frac{\partial}{\partial S}\,,
\end{eqnarray}
where we are considering $Y^{t}=0$ in~\eqref{eq:YF}.

\section{Conclusions and future work}
We have proved a geometric extension of the generalized Noether theorem for scaling symmetries recently put forward in~\cite{zhang2020generalized}.
Our construction stems from the observation that the invariants associated with scaling symmetries, cf.~\eqref{eq:conservedZhang}, in general include an explicit 
dependence on the on-shell action of the system and on the time variable, 
and therefore we argued that the extended contact phase space is the appropriate minimal geometric setting to include such invariants and their related symmetries.
Indeed, by carefully constructing the extended contact phase space and the related Hamiltonian dynamics, we have shown that a sensible definition
of Noether symmetries exists such that an extension of the generalized Noether theorem for scaling symmetries that applies to all dynamical similarities and also
to some dissipative systems can be immediately found 
(Theorem~\ref{th:GNT2}),
together with its inverse statement (Theorem~\ref{th:IGNT2}).

As we have argued, these theorems have several positive features: in the first place, their proofs are very simple and are natural generalizations of their counterparts
in the standard symplectic and contact cases; moreover, they apply equivalently to conservative systems and to those dissipative systems that a admit a description in terms 
of time-dependent contact Hamiltonian systems; by construction, in this space the thus-obtained conserved or dissipated quantities are actual functions on the manifold and do not
contain spurious terms involving integrals over the dynamics; finally, one can directly show that the generalized Noether symmetries thus derived form a Lie algebra, and therefore they are amenable
to be treated in algebraic terms, in the lines of e.g.~\cite{marmo2020symmetries}.

We hope that the analysis initiated in this work will be helpful to clarify the origin and structure of scaling symmetries and their related invariants, by
putting them on the same footing as other standard Noether symmetries. 
As further developments, we will address the comparison of our results with the ``Eisenhart-Duval lift'' of mechanical systems
employed in~\cite{zhang2019kepler}, with the ``vertical extension'' of
the tangent space presented in~\cite{garcia2020vertical}, with the reduction in the pre-symplectic setting described in~\cite{marmo2020symmetries}, 
and also with the ``unit-free approach'' to Hamiltonian mechanics introduced in~\cite{zapata2020jacobi}, and we will provide
a deeper analysis of the Lie-algebraic structure of the generalized Noether symmetries for various systems of interest in physics, e.g.~in cosmology~\cite{sloan2018dynamical,sloan2019scalar}.



\end{document}